\newtheorem{proposition}{\textbf{Proposition}}
\renewcommand{\maketag@@@}[1]{\hbox{\m@th\normalsize\normalfont#1}}%
\renewenvironment{proof}[1][\proofname]{\par\pushQED{\qed}
\normalfont \topsep6\p@\@plus6\p@\relax
\trivlist\item\relax
{\upshape#1\@addpunct{:}}\hspace\labelsep\ignorespaces}
{\popQED\endtrivlist\@endpefalse}
\begin{document}

\title{Power Allocation for Delay Optimization in Device-to-Device Networks: A Graph Reinforcement Learning Approach}
\author{Hao Fang, Kai Huang, Hao Ye, \IEEEmembership{Member,~IEEE}, Chongtao Guo, \IEEEmembership{Member,~IEEE}, \\Le Liang, \IEEEmembership{Member,~IEEE}, Xiao Li, \IEEEmembership{Member,~IEEE}, and Shi Jin, \IEEEmembership{Fellow,~IEEE}

\thanks{Hao Fang, Kai Huang, Xiao Li, and Shi Jin are with the National Mobile Communications Research Laboratory, Southeast University, Nanjing 210096, China (e-mail: fhao\_seu@seu.edu.cn; hkk@seu.edu.cn; li\_xiao@seu.edu.cn; jinshi@seu.edu.cn).

Chongtao Guo is with the College of Electronics and Information Engineering, Shenzhen University, Shenzhen 518060, China (e-mail: ctguo@szu.edu.cn).

Le Liang is with the National Mobile Communications Research Laboratory, Southeast University, Nanjing 210096, China, and also with Purple Mountain Laboratories, Nanjing 211111, China (e-mail: lliang@seu.edu.cn).

Hao Ye is with the Department of Electrical and Computer Engineering, University of California, Santa Cruz, CA 95064, USA (e-mail: yehao@ucsc.edu).}
}

\markboth{Journal of \LaTeX\ Class Files,~Vol.~14, No.~8, August~2021}%
{Shell \MakeLowercase{\textit{et al.}}: A Sample Article Using IEEEtran.cls for IEEE Journals}


\maketitle

\begin{abstract}
The pursuit of rate maximization in wireless communication frequently encounters substantial challenges associated with user fairness. This paper addresses these challenges by exploring a novel power allocation approach for delay optimization, utilizing graph neural networks (GNNs)-based reinforcement learning (RL) in device-to-device (D2D) communication. The proposed approach incorporates not only channel state information but also factors such as packet delay, the number of backlogged packets, and the number of transmitted packets into the components of the state information. We adopt a centralized RL method, where a central controller collects and processes the state information. The central controller functions as an agent trained using the proximal policy optimization (PPO) algorithm. To better utilize topology information in the communication network and enhance the generalization of the proposed method, we embed GNN layers into both the actor and critic networks of the PPO algorithm. This integration allows for efficient parameter updates of GNNs and enables the state information to be parameterized as a low-dimensional embedding, which is leveraged by the agent to optimize power allocation strategies. Simulation results demonstrate that the proposed method effectively reduces average delay while ensuring user fairness, outperforms baseline methods, and exhibits scalability and generalization capability.
\end{abstract}

\begin{IEEEkeywords}
Device-to-device communication, power allocation, graph neural networks, reinforcement learning.
\end{IEEEkeywords}

\section{Introduction}
\IEEEPARstart{D}{evice-to-device} (D2D) communication, which enables the direct data exchange between devices without the involvement of base stations or relay devices, can occur both within and independently of cellular network coverage \cite{ref1}. This communication mode is particularly significant in 5G networks due to its potential to enhance communication efficiency, reduce delay, and increase network capacity \cite{ref2}. In next-generation wireless communication networks, including 6G and beyond, the demand for low-delay communication has become increasingly critical to support emerging applications and services \cite{ref3, ref4}. These include scenarios like autonomous driving, holographic communication, and extended reality, which impose extremely stringent reliability and delay requirements. Since 2017, the 3rd generation partnership project (3GPP) has been progressively developing 5G New Radio (NR) standards \cite{ref5}, which incorporate specific requirements and technical solutions to reduce end-to-end delay \cite{ref6}. However, since communication resources are inherently limited, it is essential to further explore how to ensure fair communication opportunities and service quality for users in D2D communication through effective radio resource management (RRM).


\subsection{Problem Statement and Motivation}

For an information packet in D2D communications, the delay refers to the duration from the moment the packet is generated at the transmitter to its reception by the receiver. This metric is crucial for evaluating the performance of communication networks, particularly for applications demanding real-time responsiveness. The main components of delay in D2D communication include: 1) \emph{Propagation delay}. The time it takes for a signal to travel from the transmitter to the receiver depends on the distance between the two devices and the signal propagation speed through the medium. In free space, electromagnetic waves travel at the speed of light, so propagation delay is usually minimal. 2) \emph{Processing delay}. The time required for processing data at both the transmitter and receiver, including tasks such as encoding, decoding, encryption, and decryption, is influenced by the processing power of the device and the complexity of the algorithm used. 3) \emph{Queueing delay}. The time that data spends in the queue at the device buffer for transmission is affected by network load and resource allocation policies. 4) \emph{Transmission delay}. The time it takes to push all the bits of the packet into the transmission medium depends on the size of the packet and the transmission rate of the link \cite{ref7}.


Since propagation delay and processing delay typically contribute minimally to overall delay, greater emphasis should be placed on queueing delay and transmission delay, which are directly influenced by system resource allocation, particularly power control in spectrum-sharing situation \cite{ref8}.
Moreover, transmit powers should be dynamically allocated according to the fast-changing network channel conditions in the D2D communication environment, which are influenced by user mobility and multi-path effects \cite{ref9}. However, accurately characterizing the intricate interference relationships among different users is often challenging, which poses significant difficulties in optimizing power allocation to coordinate different users effectively. More importantly, since different transmitters may have varying amounts of backlogged packets, the power allocation must be designed to fairly maximize the transmission rates of all users to reduce the network average delay \cite{ref10, ref11}. Therefore, it is essential to perform delay-aware power allocation that fully leverages interference topology and dynamically adapts to the network state.

Considering the significant capabilities of reinforcement learning (RL) in handling sequential decision-making problems and graph neural networks (GNNs) in capturing network topology features, we propose a graph reinforcement learning-based power allocation approach for D2D networks in this paper. While the proposed method adopts a centralized architecture, it remains applicable in practical small-scale scenarios, such as vehicle-to-infrastructure (V2I) networks, industry automation systems, and edge-computing-based Internet of things. The integration of GNNs with RL enhances the reflection of network states, thereby improving the performance of the optimization algorithm.

\subsection{Related Work}
Numerous studies have focused on RRM to improve transmission rate and reduce delay in wireless communications, such as from the perspectives of power control, spectrum band selection, and user selection. Spectrum sharing and power allocation in D2D-enabled vehicular networks were studied in \cite{addref11}, where slowly varying large-scale fading information is used to maximize V2I ergodic capacity while ensuring vehicle-to-vehicle (V2V) link reliability. Optimal resource allocation in a downlink non-orthogonal multiple access system with one base station and multiple users was investigated in \cite{ref12}, leading to a dynamic programming-based power allocation algorithm. This algorithm aims to maximize long-term network utility while satisfying practical power consumption constraints. To meet stringent reliability and energy efficiency requirements, a power control and rate allocation scheme was proposed in \cite{ref13} for single-input multiple-output wireless systems. This scheme maximizes system energy efficiency by leveraging only the average statistics of the signal and interference, along with the number of antennas available on the receiver side. In \cite{ref14}, spectrum and power were allocated to vehicular links using Markov models to maximize total ergodic capacity of V2I links while limiting the delay violation probability for V2V links.

Beyond traditional optimization methods, RL-based RRM approaches have been developed to reduce delay. To leverage RL for spectrum sharing and power allocation in vehicular networks, each V2V link was treated as an agent that independently determines the optimal sub-band and transmission power level based solely on local observations, thereby meeting stringent delay requirements and minimizing interference to V2I communications \cite{ref15}. The decentralized resource sharing problem in vehicular networks was modeled as a multi-agent RL problem in \cite{addref15}, where proper reward design and training mechanisms enable multiple V2V agents successfully to cooperate in a distributed way, enhancing the sum capacity of V2I links and reducing the delay of V2V links. In mobile edge computing (MEC) systems, RL has been leveraged to design RRM algorithms that coordinate the offloading of delay-sensitive computation tasks among multiple users. These strategies automatically optimize delay performance, providing offloading services with minimal delay to enhance service quality \cite{ref16}. In a single-queue and single-server wireless communication system with block fading channels, an efficient structural-optimistic Q-learning-based link scheduling algorithm was proposed in \cite{ref17}. This algorithm jointly considers queue length and channel conditions as states to minimize average delay while adhering to average energy consumption constraints. Despite significant advancements in RL-based RRM approaches, existing works often suffer from poor scalability, limited generalization, and a lack of interpretability, primarily due to the simplistic integration of neural network architectures. To overcome these limitations, it is essential to incorporate the specific structure of the target task into the design of neural networks.

We also pay attention to the research advancements in GNNs \cite{ref18, ref19, ref20} with applications in RRM \cite{ref21, ref22, ref23, ref24, ref25, ref26, ref27}. Unlike traditional neural networks (NNs) such as multi-layer perceptron (MLP) \cite{add1}, long short-term memory (LSTM) networks \cite{add2}, and Transformer \cite{add3}, which are primarily designed for Euclidean data (e.g., images, sequences), GNNs are specifically tailored to handle non-Euclidean spatial data, making them inherently suitable for modeling wireless communication networks \cite{add4} \cite{add5}. In D2D networks, where interference occurs between nearby devices, GNNs can effectively capture localized interference patterns by iteratively aggregating neighborhood information through message passing, accurately modeling the dynamic and sparse interference topology. In contrast, traditional NNs struggle to capture such spatial dependencies, as MLP process data independently, LSTM are designed for temporal dependencies rather than spatial interactions, and Transformer lack explicit graph representation, making them less effective for handling dynamic wireless topologies. In \cite{ref23}, a two-layer multi-layer perceptron was utilized to construct GNN layers for addressing large-scale RRM problems, with power control and beamforming as examples. The proposed method, trained in an unsupervised manner with unlabeled samples, can match or even outperform classical optimization-based algorithms, demonstrating high scalability. In \cite{ref26}, GNNs were employed to parameterize resource allocation policies to address resilient RRM optimization problems with per-user minimum capacity constraints. The approach adapts to network conditions using learnable slack variables, aiming to achieve a high sum rate while ensuring fairness among all receivers. Under limited iterations, traditional RRM algorithms that employ dual optimization methods are unable to ensure the feasibility of constraints. To address this issue, a state-augmented GNN-based RRM algorithm was proposed in \cite{ref27} to solve the power control problem in D2D networks. This algorithm combines the network states with the corresponding dual variables of the constraints as inputs, utilizing GNNs as the parameterized method for RRM strategies.


Few studies have explored the combination of RL and GNNs to enhance performance in wireless communication. In the wireless control system, where interference and fading patterns induce a time-varying graph among plants and controllers within the network, GNNs were employed in conjunction with primal-dual RL to train resource allocation and scheduling policies \cite{ref28}. To solve the problem of mutual interference between multiple devices in the internet of things, a spectrum allocation algorithm based on a graph convolutional network-double duel deep Q-Network was proposed in \cite{addref2801}, achieving higher throughput. Additionally, to optimize task offloading and minimize application delay in MEC systems, a graph reinforcement learning-based framework was developed in \cite{addref2802}. This framework models MEC as an acyclic graph, making offloading decisions through graph state migration, and has demonstrated significant advantages in reducing delay while exhibiting generalization to new environments and topologies. However, in these approaches, GNNs are primarily utilized as a pre-processing feature extractor with a cascaded structure, where gradients propagate in only one direction, and GNN parameters are not updated alongside the RL model. A critical aspect to consider in such frameworks is how to update the GNN parameters. To address this, a corresponding loss function needs to be designed, and the embeddings generated by GNNs require labeling, thereby forming a supervised learning framework.


Previous studies have primarily focused on maximizing user rate while giving less attention to average delay, which leads to persistent packet accumulation for low-rate users. This issue is particularly pronounced when maximizing the sum rate. To this end, we consider a more comprehensive set of factors in this paper, incorporating not only channel state information but also detailed buffer state information, including the number of transmitted packets, the number of backlogged packets, and the delay. The proposed centralized architecture, which integrates RL with GNNs, addresses the issue by capturing network topology and dynamically adapt to the network environment. RL provides the ability to learn and refine strategies based on continuous interaction with the environment, enabling adaptive and responsive optimization. Meanwhile, GNNs accurately model the network topology through message passing on graph structures. This combined approach provides a precise representation of both network topologies and states, leading to superior performance in resource allocation \cite{addref28, addref2803}. Moreover, our method can serve as an upper-bound performance benchmark by leveraging global network state information. This enables the model to optimize power allocation policies with full knowledge of system conditions, yielding near-optimal solutions.


\subsection{Contribution}
This paper seeks to investigate a power allocation approach for delay optimization in D2D networks by leveraging GNNs-based RL, with a focus on maintaining user fairness. Unlike conventional approaches where GNNs serve primarily as a pre-processing feature extractor in a cascaded structure, our method introduces a bidirectional embedding architecture that deeply integrates GNNs within the RL framework. Specifically, GNN layers are embedded directly into both the actor and critic networks within the proximal policy optimization (PPO) framework, enabling joint gradient backpropagation and parameter sharing, which enhances learning efficiency and improves network state representation. Moreover, our approach operates within a self-supervised RL framework, enabling the GNN-embedded agent to learn directly through interactions with the environment. This eliminates the need to label the emdeddings generated by GNNs, thereby enhancing adaptability to different network conditions. The main contributions of this work are summarized as follows.
%
\begin{itemize}
  \item We introduce a delay-aware state representation that not only incorporates channel state information but also considers packet delay, the number of backlogged packets, and the number of transmitted packets. This comprehensive representation facilitates more effective power allocation decisions, optimizing delay performance while ensuring user fairness.
  \item We propose a bidirectional embedding architecture that directly integrates GNNs into the actor and critic networks of the PPO algorithm, enhancing the utilization of topology information. This integration enables the state information to be parameterized as a low-dimensional embedding, which is crucial for the agent to effectively optimize power allocation strategies.
  \item We develop a self-supervised RL framework for power allocation, where the RL agent learns optimal policies through direct interactions with the environment This framework not only enables joint gradient backpropagation and parameter sharing but also enhance the scalability and generalization capability of the proposed method.
  \item We conduct extensive numerical evaluations on various D2D communication network configurations, demonstrating the superior performance of the proposed method compared to baseline approaches. Simulation results show that the proposed method significantly reduces average delay while ensuring user fairness. Furthermore, it exhibits superior scalability and generalization.
\end{itemize}

\subsection{Paper Organization}
The rest of the paper is organized as follows. The system model and problem formulation are presented in Section \ref{section 2}. The RL method based on GNN is proposed in Section \ref{section 3}, which provides a detailed description of the key components of the proposed method. Section \ref{section 4} provides our simulation results. The conclusion and future research is made in Section \ref{section 5}.

\section{System Model and Problem Formulation}\label{section 2}
We consider a wireless network, as illustrated in Fig. \ref{fig systemmodel}, consisting of a set of $M$ transmitters and a set of $M$ receivers. Each transmitter attempts to communicate with its associated receiver, forming $M$ D2D communication pairs, where all pairs share a single spectrum sub-band. Each transmitter has a data buffer, which is assumed to be large enough to avoid dropping packets. For the entire D2D network, the communication duration, during which statistics are gathered, is equally divided into $T_{sum}$ time slots, each with a length of $T$. This duration is on the order of hundreds of microseconds, during which the wireless channel is assumed to remain approximately static. We assume that packets can arrive at the buffer of each transmitter at any time. During each time slot, each transmitter sequentially sends a number of packets, determined by the channel capacity, to its associated receiver following the first-come-first-serve rule. Any remaining packets will accumulate in buffers and be sent later. 

\begin{figure*}[htbp]
\begin{center}
  \includegraphics[width=6.2in,height=3.3in]{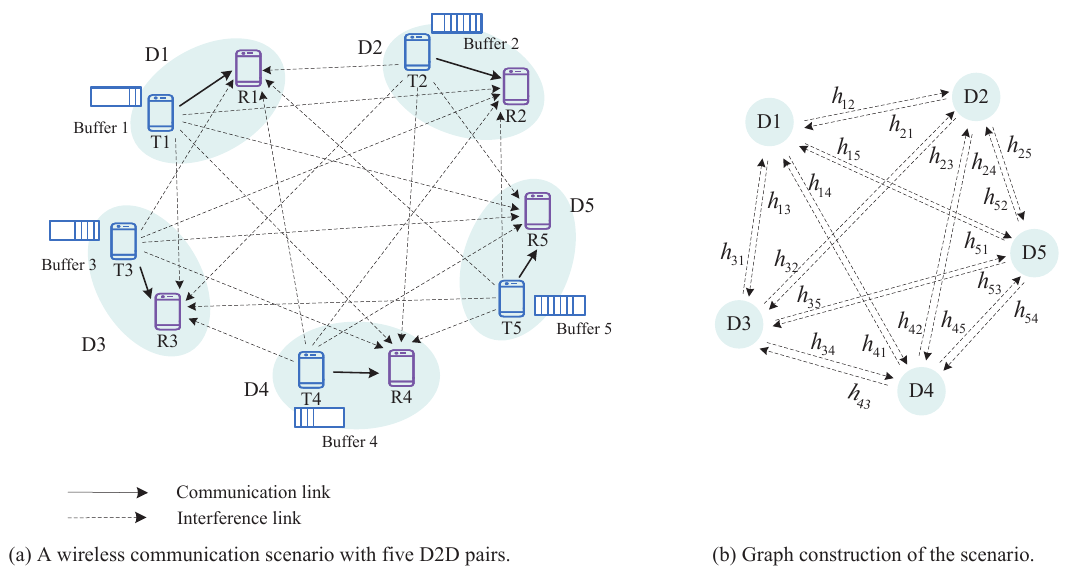}
  \caption{Illustration of a D2D communication scenario.}\label{fig systemmodel}
\end{center}
\end{figure*}

The channel gain from the transmitter Tx$_{i}$ to the receiver Rx$_{j}$, denoted as $h_{ij}$, is a random variable. The set of all channel gains forms the channel state, represented by ${\boldsymbol{\rm H}} \in {\mathbb{C}^{M \times M}}$. This channel state is typically composed of two components: long-term slow fading and short-term fast fading. We assume that the long-term slow fading remains constant throughout the communication duration, and only short-term fast fading, caused by multi-path propagation and user mobility, changes from one slot to another. Assuming all transmissions occur simultaneously on the same frequency band, they will interfere with each other. Then, the Shannon capacity of the $i$-th D2D link during the $n$-th time slot can be written as
\begin{equation}\label{adde2}
{\gamma _i}\left[ n \right] = B{\log _2}\left( {1 + \frac{{{{\left| {{h_{ii}}\left[ n \right]} \right|}^2}{p_i}\left[ n \right]}}{{{N} + \sum\limits_{j = 1,j \ne i}^M {{{\left| {{h_{ji}}\left[ n \right]} \right|}^2}{p_j}\left[ n \right]} }}} \right),
\end{equation}
where ${h_{ii}}\left[ n \right]$ denotes the direct channel coefficient for the $i$-th D2D pair in the slot $n$, ${h_{ji}}\left[ n \right]$ denotes the interfering channel coefficient formed from the $j$-th transmitter to the $i$-th receiver in the slot $n$, ${p_i}\left[ n \right]$ denotes the transmit power of the $i$-th transmitter in the slot $n$, $N$ is the noise variance, and $B$ is the bandwidth.

We assume that packets arrive at the transmitters for all D2D links following a Poisson arrival model. This assumption implies that, to decrease the average delay of the overall network, it is essential to allocate the powers of all transmitters judiciously to mitigate communication interference and ensure that each user can transmit packets efficiently. The number of packets arriving at the $i$-th transmitter during the $n$-th slot, ${\ell _i}\left( n \right)$, has a probability density function:

%
\begin{equation}\label{e1}
\Pr \left\{ {{\ell _i}\left( n \right) = k} \right\} = \frac{{{{\left( {{\lambda}T} \right)}^k}}}{{k!}}\exp \left( { - {\lambda}T} \right),{\rm{  }}k = 0,1,2, \cdots,
\end{equation}
where $\lambda$ is the average packet arrival rate. The number of packets that can theoretically be served by the $i$-th transmitter during the $n$-th time slot, denoted as ${D_i}\left[ n \right]$, has the probability mass function (PMF) given by
\begin{equation}\label{e2}
\Pr \left\{ {{D_i}\left[ n \right] = u} \right\} = \Pr \left\{ {\left\lfloor {\frac{{T{\gamma _i}\left[ n \right]}}{L}} \right\rfloor  = u} \right\},{\rm{    }}u = 0,1,2, \cdots,
\end{equation}
where $\left\lfloor  \cdot  \right\rfloor $ is the floor operation and $L$ is the number of bits per packet. The number of packets that the $i$-th transmitter can actually serve during the $n$-th time slot, denoted as ${\Xi_i}\left[ n \right]$, can be expressed as
\begin{equation}\label{e3}
{\Xi_i}\left[ n \right] = \min \left\{ {{D_i}\left[ n \right],{q_i}\left[ n \right]} \right\},
\end{equation}
where ${q_i}\left[ n \right]$ denotes the queue length at the beginning of the slot $n$. Thus, the queue in the buffer updates according to
\begin{equation}\label{e4}
{q_i}\left[ {n + 1} \right] = {q_i}\left[ n \right] - {\Xi_i}\left[ n \right] + {\ell _i}\left[ n \right].
\end{equation}

The delay experienced by each packet, denoted as $T_D$, consists of two components: queueing delay $T_b$ and transmission delay $T_s$. Queueing delay has a much greater impact on communication performance compared to transmission delay, while transmission delay is dependent on the allocated power. Therefore, we jointly consider both components of the delay. Let $m\left( i \right)$ denotes the number of packets successfully delivered by the $i$-th link during the communication duration. The delay of the $g$-th packet during $T_{sum}$ is represented as ${T_{D_i}(g)}$, which takes queueing delay and transmission delay into account, where $g \in \left\{ {1, \cdots ,m\left( i \right)} \right\}$. Our goal is to minimize the average delay for all D2D pairs, which can be expressed as
%
%
\begin{equation}\label{e5}
\begin{array}{l}
\mathop {\min }\limits_{\bf{p}} \frac{1}{M}\sum\limits_{i = 1}^M {\frac{1}{{m(i)}}} \sum\limits_{g = 1}^{m(i)} {{T_{{D_i}}}} (g)\\
{\rm{s}}{\rm{.t}}{\rm{. ~~~0}} \le {p_i} \le {P_{\max }}
\end{array}
\end{equation}
where $\textbf{p} = [{p_1},{p_2}, \cdots ,{p_M}]$ denotes the power vector of power to be optimized and $p_i$ denotes the power of the $i$-th transmitter, $i \in \{1, \cdots, M \}$.

Traditional approaches, along with some existing RL-based solutions, typically focus on optimizing the power allocation problem in \eqref{e5} based on the collected channel state ${\boldsymbol{\rm H}}$ and queue lengths, with users transmitting packets at the allocated rates. However, relying solely on these factors is insufficient to effectively reduce the average delay of the overall network. Hence, there is a pressing need to explore a more comprehensive set of factors to design the network state. Moreover, in some studies that utilize GNNs and RL simultaneously, GNNs are often employed for data embedding independently of RL, and the issue of GNN parameter updates has not been thoroughly addressed \cite{addref2804}. Therefore, it is crucial to give further consideration to how GNNs and RL can be integrated in a more coherent and effective manner.


\section{GNN-Based RL Method}\label{section 3}
In the D2D communication network scenario illustrated in Fig. \ref{fig systemmodel}, power allocation can be performed based on the collected channel state ${\boldsymbol{\rm H}}$ and the buffer state information from the previous time slot. This process is inherently a Markov decision process (MDP), making it well-suited to be modeled as an RL problem. We adopt a centralized RL approach, where a central controller collects and processes information about channel state ${\boldsymbol{\rm H}}$ and the buffers' states of all D2D pairs. The central controller acts as an agent, exploring the unknown communication environment to gain experience, which is subsequently utilized to guide policy design and refine power allocation strategies based on observations of the environment state. Since the power resource allocation problem may initially appear as a competitive game, we need to enable each user to achieve better data rates by selecting appropriate state information and designing a suitable reward function. This approach aims to minimize average delay while ensuring fairness among users. Key elements of the GNN-based RL method are described below in detail.

\subsection{State}

In the centralized RL framework for power resource allocation, a central controller operates as the agent, sequentially interacting with the environment to learn an optimal decision-making policy. This interaction can be formulated as an MDP, defined by the tuple $\left( {S,A,\rho,R} \right)$, where $S$ represents the state space, $A$ denotes the action space, $\rho$ is the state transition probability, and $R$ defines the reward function. As shown in Fig. \ref{PPO_GNN}, at each time slot $n$, the agent perceives the environment through the true system state $S_n$. Based on the observed information, the agent selects an action $a_n$ according to a policy $\pi \left( {\left. {{a_n}} \right|{S_n}} \right)$, where $\pi$ represents a probability distribution over possible actions. Following the action execution, the agent receives an immediate reward $R_{n}$, which quantifies the effectiveness of the chosen action, and the environment transitions to a new state $S_{n+1}$ according to the transition probability distribution $\rho\left( {\left. {S_{n+1}, R_{n}} \right|S_n, a_n} \right)$. Subsequently, the agent receives an updated state $S_{n+1}$, initiating the next decision cycle. Through repeated interactions, the agent aims to optimize its policy $\pi$ to maximize the expected cumulative reward over time. This is achieved by leveraging reinforcement learning algorithms that iteratively refine the policy based on observed rewards and state transitions, thereby enabling efficient power allocation strategies in dynamic wireless communication environments.

\begin{figure*}[htbp]
\begin{center}
  \includegraphics[width=6.3in,height=3.5in]{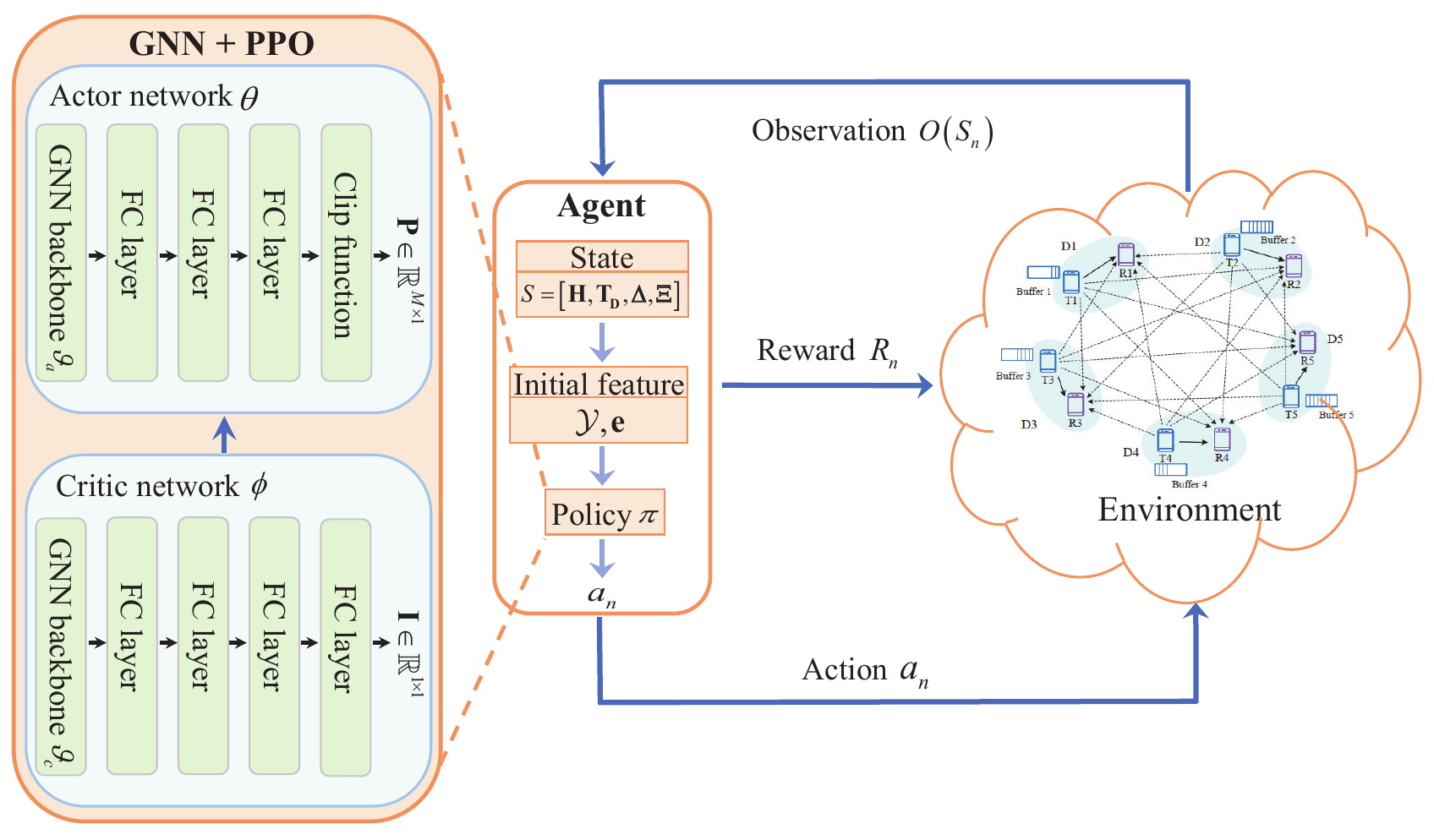}
  \caption{The overall process of the proposed method.}\label{PPO_GNN}
\end{center}
\end{figure*}

The channel quality significantly impacts communication performance in wireless communication, making channel gain critical in power allocation. Additionally, factors about the buffer state information, such as the number of packets in the buffer for each D2D pair and their experienced delay, are crucial for the purpose of optimizing delays in the considered D2D networks. Therefore, channel states ${\boldsymbol{\rm H}} \in {\mathbb{C}^{M \times M}}$, the delay vector of transmitted packets in all D2D buffer ${\boldsymbol{\rm{T_D}}} \in {\mathbb{R}^{M \times 1}}$, the number of backlogged packets ${\boldsymbol{\rm{\Delta}}} \in {\mathbb{R}^{M \times 1}}$, and the number of transmitted packets ${\boldsymbol{\rm \Xi}} \in {\mathbb{R}^{M \times 1}}$ within each time slot are selected as the environment state $S$. The state $S_n$ across the D2D network in time slot $n$, composed of the four elements: ${\boldsymbol{\rm H}_n}$, ${\boldsymbol{\rm{T_D}}_n}$, ${\boldsymbol{\rm{\Delta}}_n}$, and ${\boldsymbol{\rm \Xi}_n}$, can be summarized as follows.
\begin{equation}\label{e14}
S_n = \left\{ {\boldsymbol{\rm H}}_n, \boldsymbol{\rm{T_D}}_n, \boldsymbol{\rm{\Delta}}_n, \boldsymbol{\rm {\Xi}}_n \right\}.
\end{equation}

\subsection{Action Space}
The average delay minimization problem in \eqref{e5} of D2D communication networks essentially comes down to the transmission power control problem for D2D pairs. We consider the problem of continuous power resource allocation, where the allocated power levels have to satisfy the maximum power constraint ${p} \in \left[ {0,{P_{\max }}} \right]$. It is important to note that the dimension of the action space is directly related to the number of D2D pairs, i.e., $ \textbf{p}\in\mathbb{R}^{{M \times 1}}$. 

\subsection{Reward Design}
According to \eqref{e5}, a natural intuition is to use the average delay of packets transmitted up to the current moment as the reward. However, the direct delay-based reward function may lead to unintended consequences. Since the objective is to minimize the average delay across all successfully transmitted packets, directly using the average delay of packets transmitted up to the current moment as the reward could drive the RL agent toward an extreme policy where transmission power is minimized for all users, preventing packet transmissions altogether. In such a scenario, the measured delay would be zero, which is mathematically optimal but practically meaningless.

From the perspective of each data packet, once it enters the buffer, if it can be transmitted immediately without any retention, its delay is simply the transmission time required under the current allocated power. However, if there are packets ahead of it in the buffer, it must wait and its delay will increase by one for each additional time slot during which it remains in the buffer, which corresponds to the cumulative reward. Therefore, we count the number of packets in all buffers at each time step, allowing the cumulative reward to reflect the queueing delay experienced accurately. The reward function is designed as the negative value of the number of packets accumulated in buffers \cite{ref13} at each time step:
\begin{equation}\label{e16}
{R_n} = -\sum\limits_{i=1}^M{{\cal J}_i\left[ n \right] },
\end{equation}
where ${\cal J}_i\left[ n \right]$ denotes the number of packets accumulated in buffer $i$ at the time slot $n$. By defining the reward in terms of packet accumulation, we encourage the agent to allocate sufficient transmission power, ensuring that packets are transmitted in a timely manner while minimizing long-term queueing delay. This formulation maintains a balance between efficient transmission and delay minimization, leading to a more practical and effective resource allocation strategy.

It is widely acknowledged that an effective reward function should align its cumulative rewards as closely as possible with the objective function to be optimized. This alignment ensures that the RL method can effectively approximate the optimal solution of the objective function. The reward function we designed in \eqref{e16} is tailored to closely correspond with the objective function described in \eqref{e5}. Accordingly, the following proposition is stated.
\begin{proposition}
Given that the number of D2D pairs, $M$, and the average packet arrival rate, $\lambda$, are constant, maximizing the expected return in the MDP framework becomes asymptotically equivalent to minimizing the average delay across all D2D pairs in the original problem. This equivalence can be expressed as

\begin{equation}
\max \sum\limits_{n = 1}^{{T_{sum}}} {{R_n}}  \Leftrightarrow \mathop {\min }\limits_{\mathbf{p}} \frac{1}{M}\sum\limits_{i = 1}^M {\frac{1}{{m(i)}}} \sum\limits_{g = 1}^{m(i)} {{T_{{D_i}}}} (g),
\end{equation}
where $\left( \cdot \right)$ $\Leftrightarrow$ $\left( \cdot \right)$ denotes that the former is equivalent to the latter.
\end{proposition}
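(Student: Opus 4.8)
The plan is to show that the cumulative reward $\sum_{n=1}^{T_{sum}} R_n$ and the delay objective in \eqref{e5} coincide up to a positive multiplicative constant and asymptotically negligible terms, so their optimizers agree. First I would rewrite the return using \eqref{e16} as $\sum_{n=1}^{T_{sum}} R_n = -\sum_{i=1}^{M}\sum_{n=1}^{T_{sum}} {\cal J}_i[n]$, so that maximizing the return is the same as minimizing the total ``packet-slot'' occupancy $\sum_{i}\sum_{n} {\cal J}_i[n]$.

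The key step is a double-counting identity of Little's-law type: for each link $i$, $\sum_{n=1}^{T_{sum}} {\cal J}_i[n] = \sum_{g} b_i(g)$, where $b_i(g)$ is the number of time slots that packet $g$ resides in buffer $i$, the sum running over all packets that pass through the buffer during the horizon. This follows by interchanging the order of summation: each queued packet contributes exactly $1$ to ${\cal J}_i[n]$ for each slot $n$ it is held. Since the per-packet queueing delay equals $b_i(g)\,T$ up to a boundary term of at most $T$ (packets may arrive mid-slot), and the total delay decomposes as $T_{D_i}(g) = T_{b_i}(g) + T_{s_i}(g)$, we obtain $\sum_{i}\sum_{n} {\cal J}_i[n] = \tfrac{1}{T}\sum_{i}\sum_{g}\bigl(T_{D_i}(g) - T_{s_i}(g)\bigr) + (\text{boundary})$. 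Here I would argue the transmission and boundary contributions are lower-order: a slot of length $T$ serves $D_i[n]=\lfloor T\gamma_i[n]/L\rfloor$ packets, so the transmission times of the packets served in slot $n$ sum to at most $T$, giving $\sum_{g} T_{s_i}(g) \le T\,T_{sum}$, which is negligible relative to the queueing contribution in any regime with non-trivial backlog — precisely the regime in which delay optimization is meaningful, and the only regime in which either objective is finite.

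Next I would invoke the Poisson arrival model \eqref{e1} together with the no-drop assumption and stability of the buffers (for any policy with finite delay, departures match arrivals over the horizon): by the law of large numbers, $m(i)/T_{sum} \to \lambda T$ for every $i$ as $T_{sum}\to\infty$, so $1/m(i)$ is, asymptotically and in expectation, the link-independent constant $1/(\lambda T\,T_{sum})$. Substituting this and the identity above into \eqref{e5} gives $\tfrac{1}{M}\sum_{i} \tfrac{1}{m(i)}\sum_{g} T_{D_i}(g) = \tfrac{1}{M\lambda T\,T_{sum}}\sum_{i}\sum_{g} T_{D_i}(g) + o(1) = \tfrac{T}{M\lambda T\,T_{sum}}\sum_{i}\sum_{n} {\cal J}_i[n] + o(1) = -\tfrac{1}{M\lambda T_{sum}}\sum_{n=1}^{T_{sum}} R_n + o(1)$. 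Since $1/(M\lambda T_{sum})$ is a fixed positive constant once $M$ and $\lambda$ are held constant, minimizing the left-hand side is equivalent to maximizing $\sum_{n=1}^{T_{sum}} R_n$, which is the claimed equivalence.

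The main obstacle I anticipate is making the word ``asymptotically'' rigorous, that is, controlling two mismatches: (i) the objective averages only over the $m(i)$ packets actually delivered within $T_{sum}$, whereas ${\cal J}_i[n]$ also counts packets still queued at the end of the horizon, so I must bound the contribution of these leftover packets — under stable queues they number $O(1)$ per link and contribute $o(T_{sum})$ to the occupancy; and (ii) replacing the random $m(i)$ by its mean $\lambda T\,T_{sum}$ and discarding the transmission/boundary term must be shown to introduce only terms that vanish relative to the leading $\Theta(T_{sum})$ queueing term. Once these two estimates are established, the proportionality-constant argument immediately yields the equivalence of minimizers and maximizers.
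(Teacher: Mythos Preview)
Your proposal is correct and follows essentially the same approach as the paper: both rewrite the cumulative reward as the total packet-slot occupancy, invoke the double-counting (Little's-law) identity to convert this into a sum of per-packet queueing delays, argue that transmission delay is negligible relative to queueing delay, and then use the constancy of $M$ and $\lambda$ (hence of the expected packet count $m(i)$) to absorb the normalization factors. Your treatment is in fact more careful than the paper's, which handles the $m(i)$ replacement and the end-of-horizon truncation only heuristically, whereas you explicitly identify these as the two boundary terms to control and sketch how to bound them.
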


\begin{proof}[\rm{\textbf{Proof}}]
 Please see the Appendix A.
\end{proof}

\subsection{Learning Algorithm}
Focusing on an episodic setting, each episode covers the communication duration. In each time slot, the agent selects actions, i.e., allocated power levels $\textbf{p}$, based on the current state $S$. Following selecting the actions, the number of backlogged packets in each buffer changes, leading to an update in the reward function. This process is continuously optimized through an RL algorithm to minimize the number of backlogged packets in the buffers, thereby reducing delay.

The use of a graph structure is particularly suitable for wireless communication networks due to its ability to accurately model the inherent characteristics of such systems. In D2D networks, interference typically occurs only among nearby devices, resulting in a sparse topology. GNNs effectively capture these localized interference patterns, enhancing the accuracy of network modeling. Moreover, GNNs leverage local message passing, where each device aggregates information from its neighbors, naturally aligning with the localized nature of interference propagation. This message-passing mechanism allows GNNs to iteratively aggregate and propagate information across the network, capturing complex dependencies and patterns inherent in graph structures. Additionally, the graph representation is highly adaptable to dynamic topologies, allowing the model to effectively handle changes caused by device mobility and channel variations, thereby improving its robustness and generalization capabilities.

Unlike traditional RL methods such as deep Q-networks, which require retraining resource allocation strategies when the size of  communication network changes, the proposed method embeds GNN layers into the RL framework, enabling efficient generalization across different network sizes without the need for retraining. By leveraging GNNs to parameterize channel and buffer-related information into low-dimensional embeddings, this approach not only reduces the computational overhead associated with network variations but also exhibits superior scalability and generalization.

In the D2D communication network shown in Fig. 1(a), the links can be specifically divided into two types:
\begin{itemize}
\item \emph{Communication link}: In each D2D pair, there is a transmitter Tx$_i$, $i \in \left\{ {1, \cdots ,M} \right\}$ serving a receiver Rx$_i$. Communication links are contained in the nodes and are not characterized.
\item \emph{Interference link}: In different D2D pairs, the active transmitter Tx$_i$ in one D2D pair may interfere with the reception of receivers Tx$_j$, $j \ne i$, $j \in \left\{ {1, \cdots, M} \right\}$, belonging to the other D2D pairs. The channel of the interfering link is characterized as the connecting edge between different D2D pairs.
\end{itemize}

To leverage GNNs effectively, we begin by modeling the D2D network as a directed graph $\mathbb{G} = \left( {\mathbb{V},\mathbb{E},\mathbb{W}} \right)$, where $\mathbb{V}$ denotes the set of vertices that correspond to D2D pairs. Notably, the direct communication link for each D2D pair is contained in the vertices $\mathbb{V}$. The $\mathbb{E}$ denotes the set of interference links between different D2D pairs, where transmitters Tx$_i$ may cause interference to receivers Rx$_j$, $j \ne i$, $j \in \left\{ {1, \cdots ,M} \right\}$. These links are further characterized as edge weights determined by $\mathbb{W}$. To make better use of the topological information of the wireless network, we adopt the GNN architecture based on the graph $\mathbb{G}$. In the following, we describe the key elements of GNNs, including the initial node feature and edge feature, as well as how such features are processed and aggregated through multiple layers over the graph. Therefore, the state $S = \left\{ {\boldsymbol{\rm H}}, \boldsymbol{\rm{T_D}}, \boldsymbol{\rm{\Delta}}, \boldsymbol{\rm {\Xi}} \right\}$ needs to be processed as the initial node feature and edge feature, which will be discussed in detail below.

We introduce the proportional fairness (PF) ratio \cite{ref26} to ensure fair resource allocation decisions. Specifically, at each time slot $n$, for the receiver Rx$_i$, $i \in \left\{ {1,2, \cdots , M} \right\}$, the initial node feature includes the proportional-fairness ratio ${\Omega}{_i}\left[ n \right]$ defined as
\begin{equation}\label{e6}
  {\Omega}{_i}\left[ n \right] = \frac{{{{\hat \gamma}_i}\left[ n \right]}}{{{{\bar \gamma}_i}\left[ n \right]}},
\end{equation}
where ${\hat \gamma}{_i}\left[ n \right]$ denotes the estimated rate which is defined as
\begin{equation}\label{e7}
  {\hat \gamma_i}\left[ n \right] = {\log _2}\left( {1 + \frac{{{{\left| {{h_{ii}}\left[ n \right]} \right|}^2}{P_{\max }}}}{{{N} + \sum\limits_{j = 1,j \ne i}^M {{{\left| {{h_{ji}}\left[ n \right]} \right|}^2}{P_{\max}}} }}} \right),
\end{equation}
 and ${{{\bar \gamma}_i}\left( t \right)}$ denotes the exponential moving-average rate, which can be obtained through the following recursive update:
\begin{equation}\label{e8}
{\bar \gamma_i}\left[ n \right] = \left( {1 - \varepsilon } \right){\bar \gamma_i}\left[ {n - 1} \right] + \varepsilon {\gamma_i}\left[ n \right],
\end{equation}
with ${\gamma_i}\left[ n \right]$ being the actual achieved rate of the receiver Rx$_i$ at time slot $n$, and $\varepsilon  \in \left[ {0,1} \right]$ is the inverse averaging window length.

For the delay calculation, we consider the time elapsed from the moment a packet enters the buffer until it leaves the buffer. Specifically, we consider both the waiting time and the transmitting time of the backlogged packets in each buffer at time slot $n$. For the D2D pair $i$, the node feature characterizing the delay includes the following normalized delay:
\begin{equation}\label{e9}
{{\cal T}_i}\left[ n \right] = \frac{{T_{D_i}\left[ n \right]}}{{\sqrt {\sum\limits_{j = 1}^M {{{\left( {T_{D_j}\left[ n \right]} \right)}^2}} } }}.
\end{equation}

To count for the number of packets backlogged in each buffer at time slot $n$, for the D2D pair $i$, the node feature about the number of backlogged packets ${{\cal B}_i}\left[ n \right]$ is defined as
\begin{equation}\label{e10}
{{\cal B}_i}\left[ n \right] = \frac{{{\Delta _i}\left[ n \right]}}{{\sqrt {\sum\limits_{j = 1}^M {{{\left( {{\Delta _i}\left[ n \right]} \right)}^2}} } }},
\end{equation}
where ${{\Delta _i}\left[ n \right]}$ denotes the number of packets backlogged in buffer $i$ at the time slot $n$.

For the number of transmitted packets within each time slot, the node feature ${\boldsymbol{ \varpi }_i}\left[ n \right]$ is defined as
\begin{equation}\label{e11}
{\boldsymbol{ \varpi }_i}\left[ n \right] = \frac{{{\Xi_i}\left[ n \right]}}{{\sqrt {\sum\limits_{j = 1}^M {{{\left( {{\Xi_i}\left[ n \right]} \right)}^2}} } }},
\end{equation}

Therefore, the initial node feature is characterized by the above four components of information, that is, ${\cal {Y}} \buildrel \Delta \over = \left[ { \Omega, {\cal T}, {\cal B}, \boldsymbol{\varpi}} \right]$.


We use the channel gain as the edge feature, denoted by ${\boldsymbol{\rm{e}}} \in {\mathbb{R}^{M \times M}}$, which is normalized in dB based on \cite{ref26}. For the interference channel formed between the transmitter of the $i$-th D2D pair and the receiver of the $j$-th D2D pair in the slot $n$, the edge feature ${e_{ij}}\left[ n \right]$ is represented as follows:
\begin{equation}\label{e12}
{e_{ij}}\left[ n \right] = \frac{{\log \left( {\frac{{{P_{\max }}}}{{{\sigma ^2}}}{{\left| {{h_{ij}}\left[ n \right]} \right|}^2}} \right)}}{{\sqrt {\sum\limits_{i' = 1}^M {\sum\limits_{j' = 1}^M {{{\left[ {\log \left( {\frac{{{P_{\max }}}}{{{\sigma ^2}}}{{\left| {{h_{i'j'}}\left[ n \right]} \right|}^2}} \right)} \right]}^2}} } } }}.
\end{equation}

For each node $j \in \mathbb{V}$, the initial feature vector is denoted as ${\cal Y}_j^{\left( 0 \right)}$. Assuming that the architecture has $K$ GNN layers,  each node feature is updated through $K$ rounds of message passing and aggregation via the corresponding edge of graph $\mathbb{G}$. We use the GNN backbone proposed in \cite{ref29}, and its $k$-th layer output of node $j$ can be represented as
\begin{equation}\label{e13}
\begin{array}{l}
{{\cal Y}_j}^{\left( k \right)} =  \alpha  \bigg({{\cal Y}_j}^{\left( k-1 \right)}\vartheta _1^{\left( k \right)} + \\
~~~~~~~~~~~~~ ~~\sum\limits_{i \in {{\cal N}_j}: \left( {i,j} \right) \in \mathbb{E}} {e_{ij} \left({{{\cal Y}_j}^{\left( k-1 \right)}\vartheta _2^{\left( k \right)} - {{\cal Y}_i}^{\left( k-1 \right)}\vartheta _3^{\left( k \right)}} \right)} \bigg),
\end{array}
\end{equation}
%
%
where ${\cal N}_j$ denotes the set of neighbor nodes of node $j$, $\vartheta = \left( {\vartheta _1^k,\vartheta _2^k,\vartheta _3^k} \right)$ are learnable parameters and all in $\mathbb{R}^{{F_{k - 1}} \times {F_k}}$. $F_k$ is the dimension size of the $k$-th GNN layer. $\alpha  \left(  \cdot  \right)$ is the LeakyReLU nonlinear activation function.

With a multi-layer GNN architecture described in \eqref{e13}, each node can obtain information passed by its multi-hop neighbors. After $K$ layers, each node forms the final feature vector, that is, the node embedding ${{\rm E}_j}: = {\boldsymbol{\cal Y}}_j^{\left( K \right)} \in {\mathbb{R}^{{F_L}}}$. The embedding is then fed into the RL algorithm, where the agent learns and optimizes the policy through continuous interactions with the environment, ultimately obtaining the power control strategy. Since the GNN layer is integrated into the RL framework, its parameters are jointly updated along with the RL parameters during the training process.

\begin{figure}[htbp]
\begin{center}
  \includegraphics[width=3.4in,height=2.8in]{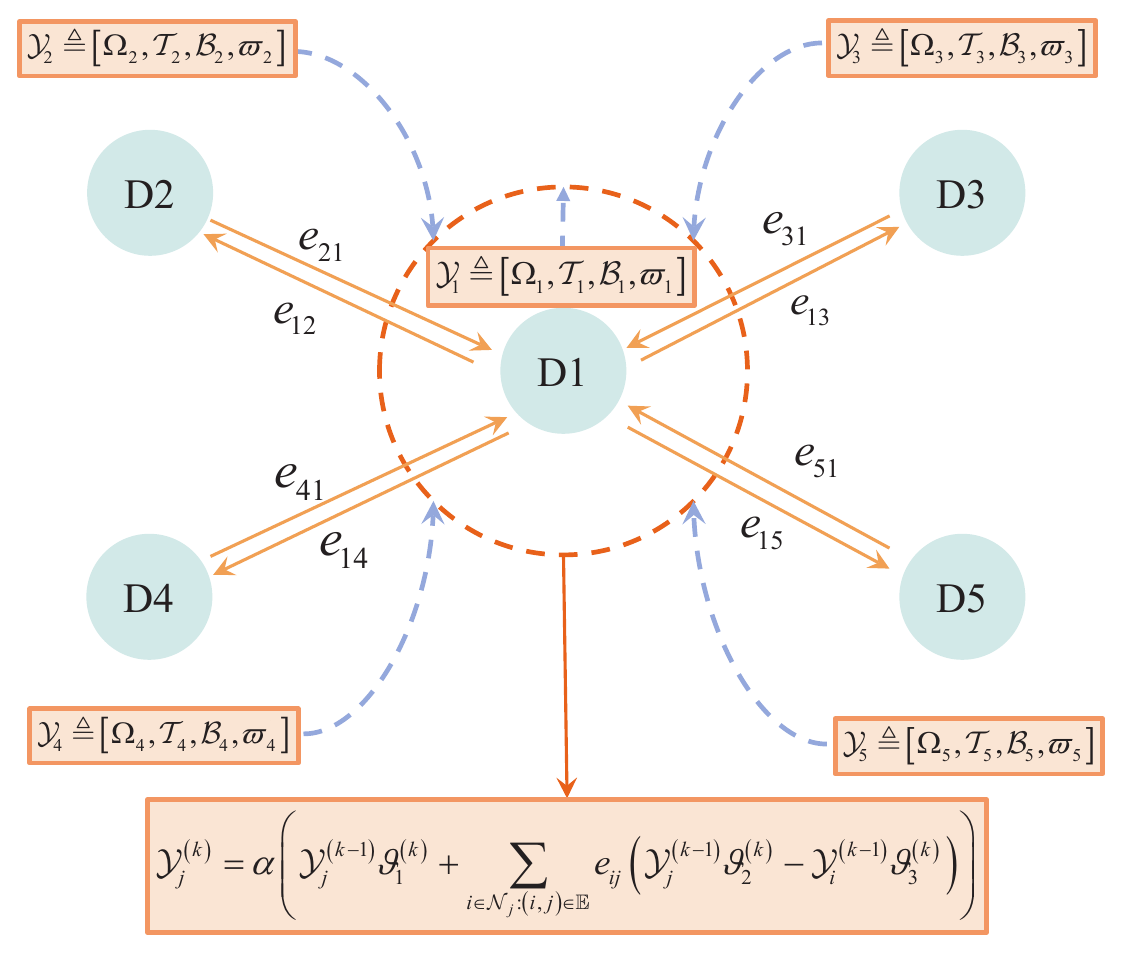}
  \caption{Illustration of implementing GNN architecture with five D2D pairs.}\label{GNN_backbone}
\end{center}
\end{figure}

The trust region optimization in the PPO algorithm \cite{ref30} ensures reliable and consistent policy updates, while its capability to handle continuous action spaces makes it highly effective for the power allocation scenario. Therefore, we adopt the PPO algorithm to address the continuous power control problem and train the agent efficiently. The agent has an actor network to choose the action and a critic network to evaluate the value of the current state. The allocated power levels satisfy the power constraint ${p_i}[n] \in \left[ {0,{P_{\max }}} \right]$ and follow a multivariate Gaussian distribution, where the mean and standard deviation are estimated by a neural network, so the last layer of the actor network is as follows.
%
%
%
\begin{equation}\label{e15}
\begin{gathered}
  \hat {\rho} \sim \mathcal{N}(\frac{{{P_{\max }}}}{2}{\rm{tanh}}(\mu ({{\mathbf{Y}}})) + \frac{{{P_{\max }}}}{2},\\
   ~~~~~~~~~~~~{\rm{softplus}}(\sigma ({{\mathbf{Y}}})) + 0.001), \hfill \\
\end{gathered}
\end{equation}
where ${\bf{Y}}$ is the output of the penultimate layer of the actor network. $\mu \left(  \cdot  \right)$ and $\sigma \left(  \cdot  \right)$ are a layer of neural networks that are used to generate the mean and variance of the expected power $\hat {\textbf{p}}$, respectively. $\rm{tanh}\left(  \cdot  \right)$ and $\rm{softplus}\left(  \cdot  \right)$ are two nonlinear activation functions. The expected power $\hat {\textbf{p}}$ is sampled based on the probability $\hat {\rho}$ and then the action, i.e. $ \textbf{p}\in\mathbb{R}^{{M \times 1}}$, can be expressed as
\begin{equation}\label{adde15}
\textbf{p}= {\rm{clip}}\left( {{ \hat {\textbf{p}}},0,P_{\max}} \right),
\end{equation}
where ${\rm{clip}}(x, \rm{min}, \rm{max})$ limits the input $x$ to the interval $\left[ {\min ,\max } \right]$ as an output. For the PPO algorithm, the objective function of training the actor network is
\begin{equation}\label{e17}
{\mathcal{L}^a}\left( \theta  \right) = {\mathbb{E}_t}\!\left[ {\min \left( {{r_t}\left( \theta  \right){{\hat A}_t},{\rm {clip}}\left( {{r_t}\left( \theta  \right),1 \!-\! \epsilon ,1\! +\! \epsilon } \right){{\hat A}_t}} \right)} \right],
\end{equation}
%
%
%
where $\theta$ is the trainable parameter of the policy network, ${r_t}\left( \theta  \right) = {\pi _\theta }\left( {\left. {{a_t}} \right|{s_t}} \right)/{\pi _{{\theta _{old}}}}\left( {\left. {{a_t}} \right|{s_t}} \right)$ is the probability ratio of the new policy $\pi _\theta$ to the old policy $\pi _{{\theta _{old}}}$, which is the key element of importance sampling. The second term inside the min operation, ${{\rm {clip}}\left( {{r_t}\left( \theta  \right),1 - \epsilon,1 + \epsilon} \right){{\hat A}_t}}$, is used to limit the magnitude of gradient updates to stabilize the training process, and $\epsilon$ is a hyperparameter. ${\hat A}_t$ is the generalized advantage estimation (GAE) function, which is utilized to evaluate the quality of an action in comparison with the average action taken in that state. The objective function for training the critic network can be expressed as
\begin{equation}\label{e18}
{\mathcal{L}^c}\left( \phi  \right) = {\mathbb{E}_t}\left[ {{{\left( {{r_t} + V{_\phi}\left( {{s_{t + 1}}} \right) - V{_\phi}\left( {{s_t}} \right)} \right)}^2}} \right],
\end{equation}
where $V{_\phi}\left( {{s_t}} \right)$ denotes the value of the current state and can be estimated by the critic network.

It is worth noting that we embed the GNN layers into both the actor and critic networks to update GNN parameters, denoted as $\vartheta_a$ and $\vartheta_c$, respectively. This integration enables $\vartheta_a$ and $\vartheta_c$ to be updated concurrently with $\theta$ and $\phi$. The dimension of the output of the actor network here corresponds to the number of D2D pairs, i.e., $ \textbf{p}\in\mathbb{R}^{{M \times 1}}$. Besides, a fully connected (FC) layer is incorporated into the final layer of the critic network to produce a scalar output $ \textbf{I}\in\mathbb{R}^{{1 \times 1}}$.

The overall process of the proposed method is shown in Fig. \ref{PPO_GNN}, and the complete GNN-based RL algorithm is summarized in Algorithm 1. Although the proposed method adopts a centralized architecture, it remains applicable to certain practical wireless communication scenarios, particularly when centralized coordination and global state information are achievable. For instance, in V2I networks, roadside units or edge servers can function as centralized controllers, collecting real-time channel state information and buffer states from multiple vehicles. This centralized architecture is well-suited for V2I communication scenarios that demand low-latency and high-reliability services, such as cooperative perception and autonomous driving.

\begin{algorithm}[htb]
\caption{ RL algorithm based on GNN}\label{alg1}
\renewcommand{\algorithmicrequire}{\textbf{Input:}}
\renewcommand{\algorithmicensure}{\textbf{Output:}}
\begin{algorithmic}[1] 
\REQUIRE ~~\\ 
    Communication networks configurations, channel state information $\textbf{H}$, the number of trajectory in the memory buffer when the agent updates $N_{tra}$ times.
\ENSURE ~~\\ 
    The allocated power levels $\textbf{p}$.
    \STATE Initialize parameters: $\theta$, $\phi$.
      \FOR {training episode $i \in \left[ {1,N_t} \right]$}
            \STATE Initialize buffers of D2D pairs.
            \STATE Generate node features $\boldsymbol{\cal Y}$ and edge feature $\textbf{e}$.
            \STATE Empty the memory buffer.
      \FOR {time step $t \in \left[ {1,T_{sum}} \right]$}
            \STATE Packets arrive at buffers.
            \STATE Update buffers of D2D pairs.
            \STATE Iterate the embedding using \eqref{e13}. \\
            \STATE Obtain the allocated power levels using \eqref{adde15}. \\
            \STATE Each transmitter sends packets from its buffer.
            \STATE Update buffers of D2D pairs.
            \STATE Save the experience date to the memory buffer.
      \ENDFOR
      \IF {$i~mod~{N_{tra}} = 0$ }
            \FOR {agent update times $n \in \left[ {1,K} \right]$}
            \STATE Sample data from the memory buffer.
            \STATE Update parameters $\theta$ and $\phi$.
      \ENDFOR
      \STATE Empty the memory buffer.
      \ENDIF
      \ENDFOR
\end{algorithmic}
\end{algorithm}

In practical wireless networks, user connections frequently fluctuate, necessitating resource allocation methods with robust generalization capabilities. Unlike traditional models that require retraining as network size changes, GNNs naturally adapt to dynamic topologies through three key properties. First, weight sharing across nodes in each layer enables a trained GNN to process new nodes without parameter adjustments. Second, local neighborhood aggregation produces representations that transfer seamlessly to larger networks since the local structure remains largely consistent. Third, the inherent permutation invariance of message-passing operations ensures that model outputs depend solely on the structural relationships among nodes rather than on their ordering. By leveraging these properties, our approach scales seamlessly to varying network sizes, maintaining consistency and robustness in decision-making.

Unlike conventional graph-based methods presented in \cite{ref21, ref22, ref23, ref28}, where permutation invariance is primarily leveraged as a pre-processing step, our approach integrates permutation equivariance directly within the RL framework. By incorporating permutation equivariance with domain-specific constraints, such as the coupling between channel interference and power allocation, our method ensures that the learned policies remain robust under varying network conditions. This integration not only preserves the beneficial properties of permutation invariance but also enhances the system’s scalability to different network sizes and topologies without the need for model retraining. Mathematically, we formalize this property in the context of our integrated GNN-based RL framework as follows. Let $\boldsymbol{\Pi}  \in {\mathbb{R}^{M \times M}}$ be a permutation matrix where each row and each column contains exactly one element equal to $1$ and all other elements are $0$, and it satisfies
\begin{equation}\label{adde19}
\boldsymbol{\Pi} {\textbf{1}_M} = {\boldsymbol{\Pi} ^T}{\textbf{1}_M} = {\textbf{1}_M},
\end{equation}
where $\textbf{1}_M$ denotes a vector of size $M$ with all entries equal to one, and ${ (\cdot) ^T}$ is the transpose operation. In addition, $\mathbf{A} \in {\mathbb{R}^{M \times M}}$ denotes the weighted adjacency matrix of the graph $\mathbb{G}$, for any $\left( {i,j} \right) \in \mathbb{V}$,
\begin{equation}\label{e19}
{A_{ij}} = \left\{ {\begin{array}{*{20}{c}}
  {w\left( {i,j} \right),}&{{\text{if }}\left( {i,j} \right) \in \mathbb{V},} \\
  {0,}&{{\text{o}}{\text{.w}}{\text{.}}}
\end{array}} \right.
\end{equation}

Besides, let $\boldsymbol{\cal Y}: = {\left[ {{\cal Y}{{_1^{(0)}}^{^T}}, \cdots, {\cal Y}{{_M^{(0)}}^{^T}}} \right]^T} \in {\mathbb{R}^{M \times {F_0}}}$ denote the arrays of initial node features, ${\boldsymbol{\rm E}}: = {\left[ {{\rm E}_1^T, \cdots ,{\rm E}_M^T} \right]^T} \in {\mathbb{R}^{M \times {F_K}}}$ denote final node embedding, and $\boldsymbol{p} \in {\mathbb{R}^{M \times 1}}$ denote the allocated power levels. Thus, the following proposition is stated.
\begin{proposition}\label{proposition1}
Let $\Psi $ be a GNN operator that maps the initial node features $\boldsymbol{\cal Y} \in {\mathbb{R}^{M \times {F_0}}}$ and the weighted adjacency matrix $\mathbf{A}\in {\mathbb{R}^{M \times {M}}}$ to a set of node embeddings ${\boldsymbol{\rm E}} \in {\mathbb{R}^{M \times {F_K}}}$ through $K$ GNN lagers, i.e.,
\begin{equation}
{\boldsymbol{\rm E}} = \Psi \left( {\boldsymbol{\cal Y},{\mathbf{A}}} \right).
\end{equation}

Let $\Theta $ be an RL operator that maps the node embeddings to the allocated power levels $\boldsymbol{p} \in {\mathbb{R}^{M \times {1}}}$ via an end-to-end learning process that integrates GNN layers within the actor and critic networks, i.e.,
\begin{equation}
\boldsymbol{p} = \Theta \left( {\boldsymbol{\rm E}} \right).
\end{equation}

Then, for any given permutation matrix $\boldsymbol{\Pi}$, we have
\begin{equation}\label{e20}
\Theta  \left(  {\Psi \left( {\boldsymbol{\Pi} \boldsymbol{\cal Y},\boldsymbol{\Pi} \mathbf{A}{\boldsymbol{\Pi} ^T}} \right)} \right)= \boldsymbol{\Pi} \Theta \left({ \Psi \left( \boldsymbol{\cal Y}, \mathbf{A}\right)}\right).
\end{equation}
\end{proposition}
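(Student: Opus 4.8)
The plan is to prove Proposition~\ref{proposition1} in two stages, establishing permutation equivariance for the GNN operator $\Psi$ first and then for the composite map $\Theta \circ \Psi$. The overall strategy exploits the fact that both stages are built entirely from \emph{per-node} operations (linear maps applied to node features, followed by pointwise nonlinearities) and \emph{neighborhood aggregations} (sums over edges weighted by edge features), neither of which depends on any labeling of the nodes.

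\emph{Stage 1: Equivariance of $\Psi$.} I would proceed by induction on the GNN layer index $k$. The base case is immediate: permuting the input features $\boldsymbol{\cal Y}\mapsto\boldsymbol{\Pi}\boldsymbol{\cal Y}$ simply reorders the rows, so ${\cal Y}_j^{(0)}$ in the permuted graph equals ${\cal Y}_{\pi^{-1}(j)}^{(0)}$ in the original, where $\pi$ is the permutation encoded by $\boldsymbol{\Pi}$. For the inductive step, I would write out the update rule~\eqref{e13} for the permuted graph. The key observations are: (i) the edge set transforms as $(i,j)\in\mathbb{E}$ in the original graph iff $(\pi(i),\pi(j))$ is an edge in the permuted graph, because $\mathbf{A}\mapsto\boldsymbol{\Pi}\mathbf{A}\boldsymbol{\Pi}^T$ relabels both endpoints consistently; (ii) the edge feature $e_{ij}$ transforms the same way, since the normalization in~\eqref{e12} is a sum over all ordered pairs and is therefore permutation-invariant, so the permuted edge weight between $\pi(i)$ and $\pi(j)$ equals the original $e_{ij}$; (iii) the learnable matrices $\vartheta_1^{(k)},\vartheta_2^{(k)},\vartheta_3^{(k)}$ and the activation $\alpha(\cdot)$ act identically on every node and commute with relabeling. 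Combining these, the sum $\sum_{i\in{\cal N}_j}$ in the permuted graph over node $\pi(j)$ is exactly the sum $\sum_{i\in{\cal N}_{j}}$ in the original graph over node $j$, after applying the inductive hypothesis ${\cal Y}_i^{(k-1)}$ (permuted, node $\pi(j)$) $={\cal Y}_{j}^{(k-1)}$ (original, node $j$). Hence ${\cal Y}_j^{(k)}$ in the permuted graph equals ${\cal Y}_{\pi^{-1}(j)}^{(k)}$ in the original, i.e. $\Psi(\boldsymbol{\Pi}\boldsymbol{\cal Y},\boldsymbol{\Pi}\mathbf{A}\boldsymbol{\Pi}^T)=\boldsymbol{\Pi}\,\Psi(\boldsymbol{\cal Y},\mathbf{A})$ after $K$ layers.

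\emph{Stage 2: Equivariance of $\Theta$.} Here I would argue that $\Theta$ maps node embeddings to per-node power values through operations that are again node-wise. The actor network applies the same fully-connected layers to each node's embedding $\mathrm{E}_j$ independently, and the final layer~\eqref{e15}--\eqref{adde15} applies the same $\mathrm{tanh}$, $\mathrm{softplus}$, clipping, and Gaussian sampling to each coordinate. Thus if the input embeddings are reordered by $\boldsymbol{\Pi}$, the output power vector is reordered by the same $\boldsymbol{\Pi}$: $\Theta(\boldsymbol{\Pi}\boldsymbol{\rm E})=\boldsymbol{\Pi}\,\Theta(\boldsymbol{\rm E})$. (The critic network's scalar output is permutation-\emph{invariant} and does not affect the action, so it only needs a remark.) Chaining Stage~1 and Stage~2 gives $\Theta(\Psi(\boldsymbol{\Pi}\boldsymbol{\cal Y},\boldsymbol{\Pi}\mathbf{A}\boldsymbol{\Pi}^T))=\Theta(\boldsymbol{\Pi}\,\Psi(\boldsymbol{\cal Y},\mathbf{A}))=\boldsymbol{\Pi}\,\Theta(\Psi(\boldsymbol{\cal Y},\mathbf{A}))$, which is exactly~\eqref{e20}.

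The main obstacle I anticipate is not the algebra of the induction itself but being careful about two subtleties. First, the edge-feature normalization in~\eqref{e12} must genuinely be shown permutation-invariant (the denominator is a double sum over all node pairs, so it is, but this needs to be stated explicitly, otherwise the per-edge weights would not transform cleanly). Second, the sampling step in~\eqref{e15} makes $\Theta$ stochastic, so the equivariance should be interpreted either in distribution or under a fixed/shared random seed across the permuted coordinates; I would state this as an assumption (e.g. the noise is drawn i.i.d. per coordinate, so the output \emph{distribution} is equivariant) to keep the statement rigorous. Handling the index bookkeeping for $\pi$ versus $\pi^{-1}$ consistently through $K$ layers is the other place where care is needed, but it is routine once the notation is fixed.
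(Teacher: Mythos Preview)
Your Stage~1 argument is essentially the paper's proof: the paper also proceeds layer by layer, showing that under the permutation the neighborhood set and the edge weights of a node transform consistently, so that the aggregation-and-combination update~\eqref{e13} for node $\tilde v$ in the permuted graph coincides with the update for node $\pi(v)$ in the original, whence $\tilde{\boldsymbol{\cal Y}}^{(l)}=\boldsymbol{\Pi}\boldsymbol{\cal Y}^{(l)}$ for every $l$.

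Where you diverge from the paper is that you go further. The paper's proof stops after establishing equivariance of $\Psi$ and never treats $\Theta$ at all; it simply concludes from $\tilde{\boldsymbol{\cal Y}}^{(l)}=\boldsymbol{\Pi}\boldsymbol{\cal Y}^{(l)}$ that the proof is complete. Your Stage~2, arguing that the actor's per-node fully-connected layers, $\tanh$/softplus, and clip all act identically on each coordinate and therefore commute with $\boldsymbol{\Pi}$, supplies the missing link needed to reach~\eqref{e20} as stated. You are also more careful than the paper on two points it passes over in silence: the permutation-invariance of the denominator in the edge normalization~\eqref{e12}, and the fact that the Gaussian sampling in~\eqref{e15} makes $\Theta$ stochastic so that equivariance should be read in distribution (or under coupled per-coordinate noise). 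Both observations are correct and strengthen the argument; the paper's proof is effectively a proper subset of yours.
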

%
%
%
%

%
\begin{proof}[\rm{\textbf{Proof}}]
 Please see the Appendix B.
\end{proof}

\section{Simulation Results}\label{section 4}
\subsection{Wireless Network and Parameters Settings}
Based on \cite{ref26} and \cite{ref27}, we configure the parameters for the D2D communication scenario to ensure consistency with practical deployment conditions. These studies consider similar network settings, including topology, channel characteristics, and interference models, making them well-suited references for our parameter selection. Specifically, $M$ transmitters are uniformly distributed at random within a $500$ m $\times$ $500$ m square network area, ensuring a minimum distance of $75$ m between each transmitter. Each transmitter is associated with a receiver dropped uniformly at random within an annulus surrounding it, with inner and outer radii of $10$ m and $50$ m, respectively. Long-term channel fading follows a dual-slope path-loss model with a $7$ dB long-normal shadowing component. Short-term channel fading varies across time steps and is modeled using a Rayleigh distribution, employing the sum of sinusoids (SOS) technique with a pedestrian speed of $1$ m/s. The bandwidth is $10$ MHz, and the noise power spectral density is $-174$ dBm/Hz, resulting in a noise variance of $-104$ dBm. The maximum transmit power for each D2D pair is $P_{\max} = 10$ dBm. The communication duration is divided into $T_{sum}=300$ time slots, each with a length of $T=1$ ms. Unless stated otherwise, packet arrivals follow a Poisson distribution with an average arrival rate of $\lambda = 3$, namely three packets per millisecond on average, each packet having a length of $4000$ bits. For simplicity in analysis, buffer capacities are assumed to be infinite to prevent packet loss. The primary system parameters are summarized in Table \ref{parameters}. By default, all parameters are set to the values specified in Table \ref{parameters}, with the settings in each figure taking precedence whenever applicable.

\begin{table}[htbp]
\centering
\caption{Simulation parameters \cite{ref26,ref27}}\label{parameters}
\begin{tabular}{|l|l|}
\hline
\textbf{Parameter}                                 & \textbf{Value }                                 \\
\hline
Number of D2D pairs                            & ~6                                     \\
\hline
Area size                                          & 500 m $\times$ 500 m  \\
\hline
Minimum Tx-Tx distance                              & 75 m                                   \\
\hline
Tx-Rx distance                                     & 10 m $\sim$ 50 m              \\
\hline
Bandwidth                                          & 10 MHz                                 \\
\hline
Noise power spectral density                       & -174 dBm/Hz                            \\
\hline
Noise variance $N$                                 & -104 dBm                               \\
\hline
Maximum transmit power                             & 10 dBm                                \\
\hline
Communication duration                             & 300 ms                                 \\
\hline
Length of each time slot $T$                   & 1 ms                                   \\
\hline
Average arrival rate $\lambda$                 & 3 packets/ms                           \\
\hline
Packet length                                      & 4000 bits                              \\
\hline
\end{tabular}
\end{table}

As for the setting of GNN parameters, we conduct simulation experiments to investigate the impact of the number of GNN layers as shown in Fig. \ref{layer_impact} and the feature dimension, as illustrated in Table \ref{feature_dim}. As shown in Fig. \ref{layer_impact}(a), the number of GNN layers has minimal impact on the achieved average delay, as the average delay is primarily determined by network size and channel conditions. However, as shown in Fig. \ref{layer_impact}(b) the number of GNN layers significantly influences the training convergence speed, so we use $K=2$ hidden layers based on the local extremum operator in \eqref{e13}. The results in Table \ref{feature_dim} demonstrate that increasing the feature dimension from 16 to 32 significantly enhances performance by improving the model's expressive capacity. However, further increasing the dimension to 64 provides only marginal gains, and at 128, the improvement becomes negligible while incurring a notable increase in training time. Considering this trade-off between performance and computational efficiency, we set the feature dimension to 64 ($F_1=F_2=64$) to achieve a balanced trade-off between accuracy and training efficiency. Regarding the agent, both the actor and critic networks consist of three fully connected hidden layers. These layers contain 500, 250, and 120 neurons, respectively. In the setting of the PPO algorithm, the clip parameter $\epsilon$ is $0.2$, the discount rate is $0.99$, and the GAE parameter is $0.95$.

\begin{figure*}[htbp]
\begin{center}
\subfigure[The impact of the number of GNN layers on training speed.]{
\includegraphics[width=3.2in,height=2.5in]{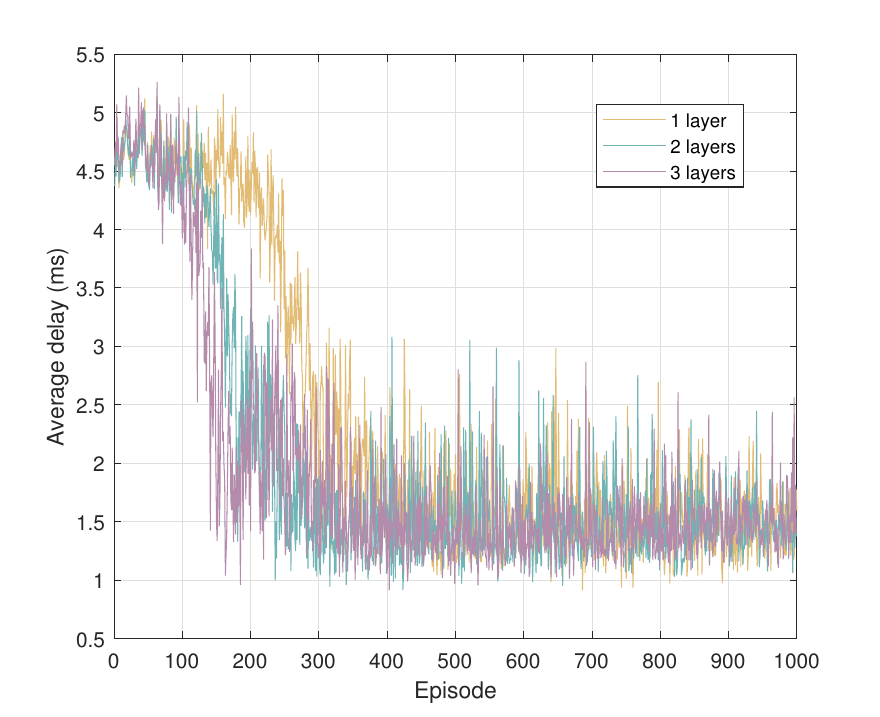}}\label{layer_impact_delay}
\subfigure[The impact of the number of GNN layers on average delay.]{
\includegraphics[width=3.2in,height=2.5in]{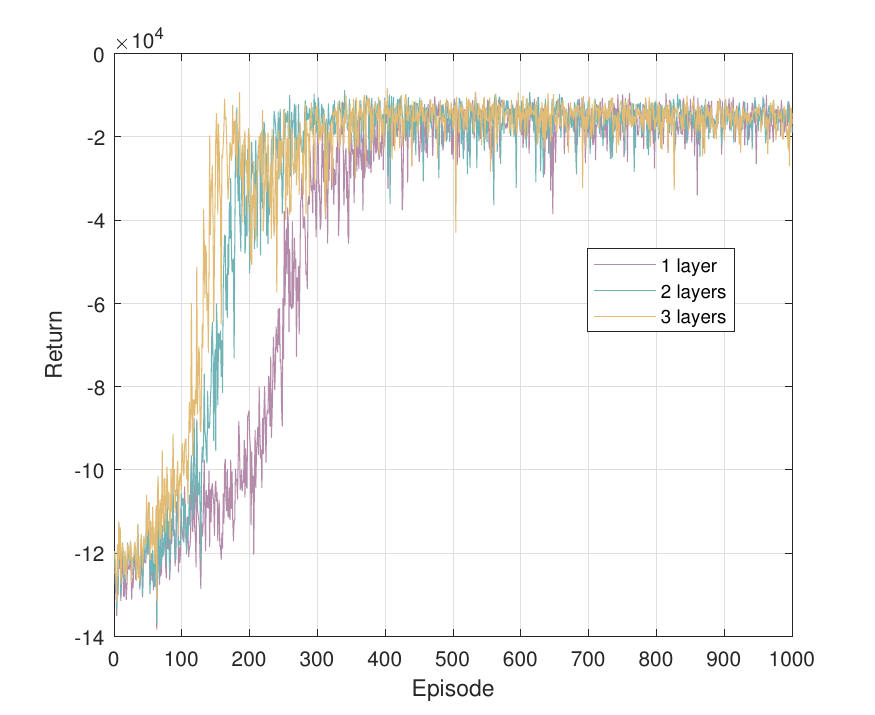}}\label{layer_impact_return}	
\caption{The impact of the number of GNN layers.} \label{layer_impact}
\end{center}
\vspace{-15pt}
\end{figure*}

\begin{table}[htbp]
\centering
\caption{Average Delay Performance with Different Feature Dimensions}\label{feature_dim}
\label{tab:feature_dim}
\begin{tabular}{c|c|c|c|c}
\hline
\textbf{Feature Dimension} & 16    & 32    & 64    & 128   \\ \hline
\textbf{Average Delay (ms)} & 1.715 & 1.439 & 1.096 & 1.078 \\ \hline
\end{tabular}
\end{table}

\subsection{Baseline Methods}
To reflect the performance of our proposed method, the following baseline methods are considered:
\begin{itemize}
\item \textbf{ITLinQ} \cite{ref34}. ITLinQ is a method designed to optimize the network topology by considering incoming interference as noise. By prioritizing high-quality links and minimizing interference, it aims to improve the overall network performance.
\item \textbf{Max power}. The transmitter in each D2D pair transmits at maximum power $P_{\max}$, and in this case, the interference between different D2D pairs is relatively large.
\item \textbf{Random power}. The transmitter in each D2D pair will randomly obtain a value within $\left[ 0, P_{\max} \right]$ as power.
\item \textbf{WMMSE} \cite{ref35}. WMMSE is an optimization algorithm used to enhance network performance by managing interference and improving signal quality. To pursue the sum rate maximization, it aims to minimize the mean square error (MSE) of the received signal while considering the weights assigned to different users.
\item \textbf{PPO} \cite{ref30}. PPO is a deep RL algorithm that balances exploration and exploitation through trust region optimization. It updates the policy by constraining the step size, ensuring stable and reliable convergence. PPO is particularly effective for continuous action spaces, making it suitable for power allocation tasks in wireless networks.
\item \textbf{PPO + GraphSAGE} \cite{addref2803}. This method integrates PPO with GraphSAGE, a GNN architecture designed for inductive learning on large-scale graphs. GraphSAGE generates low-dimensional embeddings by sampling and aggregating neighboring node features, effectively capturing local topology information.
\item \textbf{TD3} \cite{refTD3} \textbf{+ ASAP} \cite{ref29}. Twin Delayed Deep Deterministic Policy Gradient (TD3) is an actor-critic-based RL algorithm designed for continuous control tasks, known for reducing overestimation bias and improving stability.
\end{itemize}
%


\subsection{Convergence and Validity Analysis}

To validate the effectiveness and convergence of the proposed method, experiments are conducted in a scenario involving six D2D pairs. Fig. \ref{training process} shows the cumulative reward for each training episode. The figure demonstrates a progressive increase in the cumulative reward throughout the training process, indicating the effectiveness of our proposed method. Besides, the performance gradually stabilizes in around $400$ training episodes, despite some fluctuations due to channel variations and changes of packets in buffers. As shown in the figure, the proposed method consistently achieves rapid convergence regardless of the network size or traffic load. This demonstrates the scalability and effectiveness of the method under varying network conditions.
\begin{figure}[htbp]
\begin{center}
  \includegraphics[width=3.2in,height=2.5in]{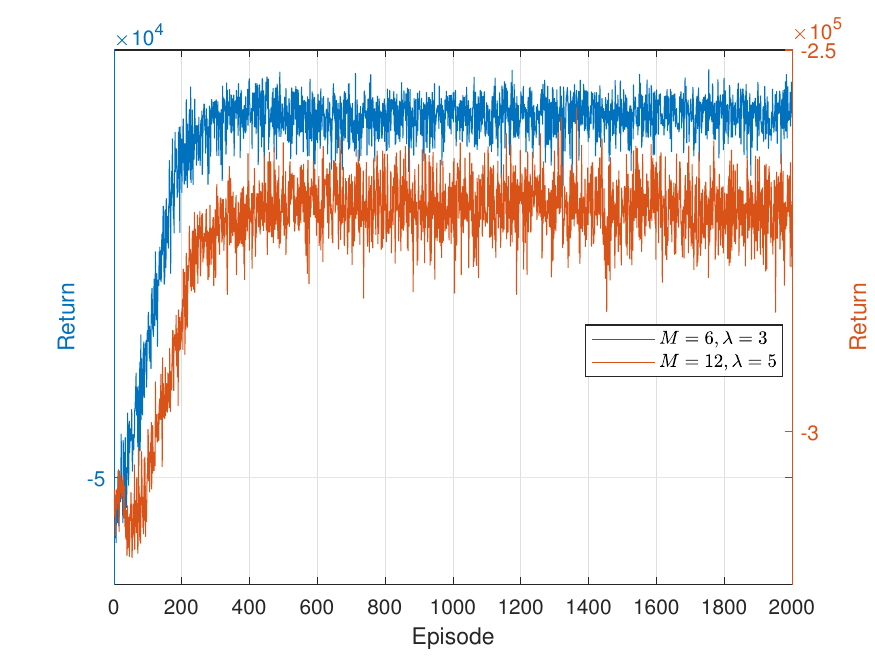}
  \caption{Return for each training episode with increasing iterations.}\label{training process}
\end{center}
\end{figure}

In Fig. \ref{performance with training}, we compare the performance of the proposed method with baseline methods in the same D2D communication scenario involving six D2D pairs from three perspectives: Average delay, 5-th percentile delay, and 95-th percentile delay. Consistent with Fig. \ref{training process}, we train the PPO network of the agent over $2000$ episodes to evaluate the performance of our method. Fig. \ref{performance with training}(a) shows the average delay of transmitted packets for six D2D pairs over $300$ time slots. It can be observed that performance gradually converges in around $400$ training episodes, as seen in Fig. \ref{training process}. The WMMSE method aims to maximize the sum rate of the entire communication network. However, this focus often leads to an imbalance in resource allocation. Users with favorable channel conditions may achieve higher transmission rates than necessary, potentially wasting power resources. Meanwhile, users with poor channel conditions are allocated insufficient power resources, leading to packet accumulation in their buffers. This imbalance negatively affects the average delay of the overall communication network. The random power method evidently fails to guarantee delay performance. Although the max power method allows all users to transmit packets at maximum power, it increases interference among users, thereby degrading performance. The ITLinQ method, which balances rate and interference, outperforms all baseline methods. In contrast, the proposed method directly aims to minimize the average delay of all D2D links and thereby learns to strategically allocate power to each link so that the transmission rates are adapted according to their queueing states. Therefore, its performance significantly surpasses baseline methods, achieving approximately a $56\%$ improvement over the ITLinQ method. Fig. \ref{performance with training}(b) and Fig. \ref{performance with training}(c) illustrate the 5-th percentile delay and the 95-th percentile delay of transmitted packets for six D2D pairs over $300$ time slots, respectively. From these two figures, it can be seen that the baseline methods, except ITLinQ, are much higher than the proposed method in terms of the 5-th percentile delay and the 95-th percentile delay. However, the proposed method sacrifices part of the 5-th percentile delay but dramatically reduces the 95-th percentile delay. This is why the proposed method outperforms baseline methods in terms of average delay, ensuring fairness among users.

In Fig. \ref{performance with training}, we conduct an ablation study to isolate and evaluate the contributions of the GNN and RL components. Specifically, we add the PPO algorithm without GNN integration, where the agent directly uses raw state features for decision-making. This setup assesses the effectiveness of the GNN in extracting network topology information. From the figure, it can be observed that the proposed method achieves faster convergence and superior performance compared to the standalone PPO algorithm. Additionally, the proposed algorithm demonstrates a noticeable performance improvement when compared with two alternative configurations: using GraphSAGE as the GNN backbone and employing TD3 as the RL algorithm. This highlights the effectiveness of the adopted GNN architecture and the PPO framework in enhancing the overall system performance.

\begin{figure*}[h]
\begin{center}
\hspace{-10mm}
\subfigure[Average delay.]{
    \includegraphics[width=2.6in,height=1.9in]{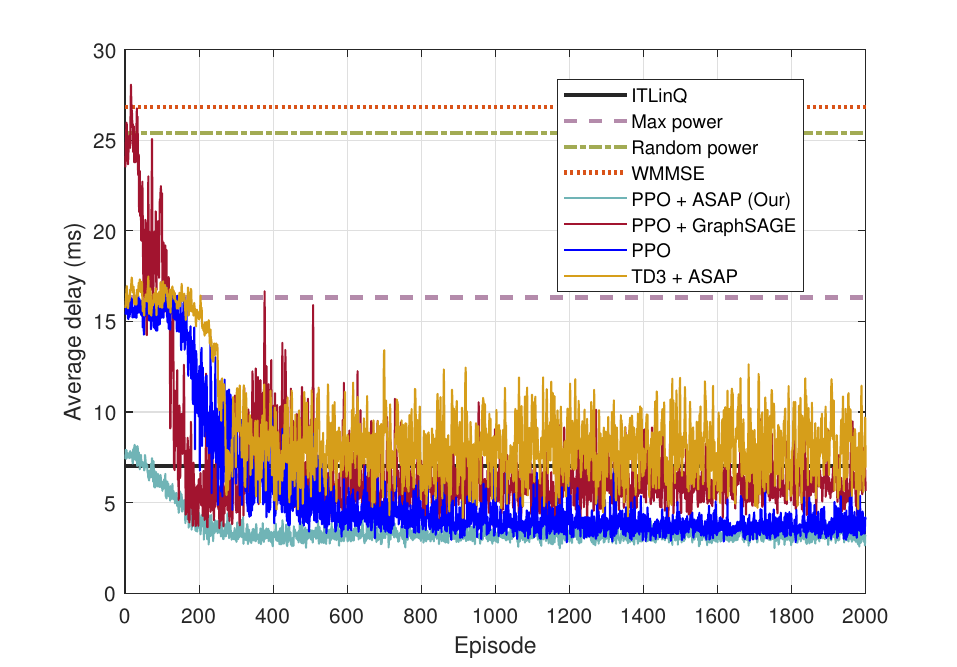}}\label{Average_delay}
\hspace{-8mm}
\subfigure[5-th percentile delay.]{
    \includegraphics[width=2.6in,height=1.9in]{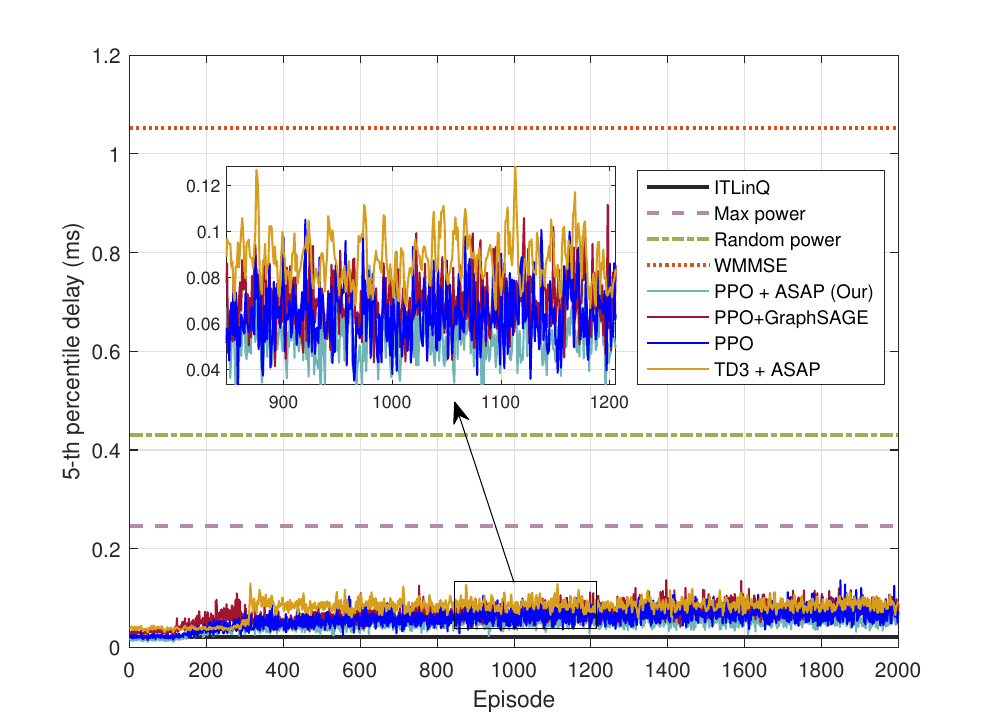}}\label{5th_percentile_delay}
\hspace{-8mm}
\subfigure[95-th percentile delay.]{
    \includegraphics[width=2.6in,height=1.9in]{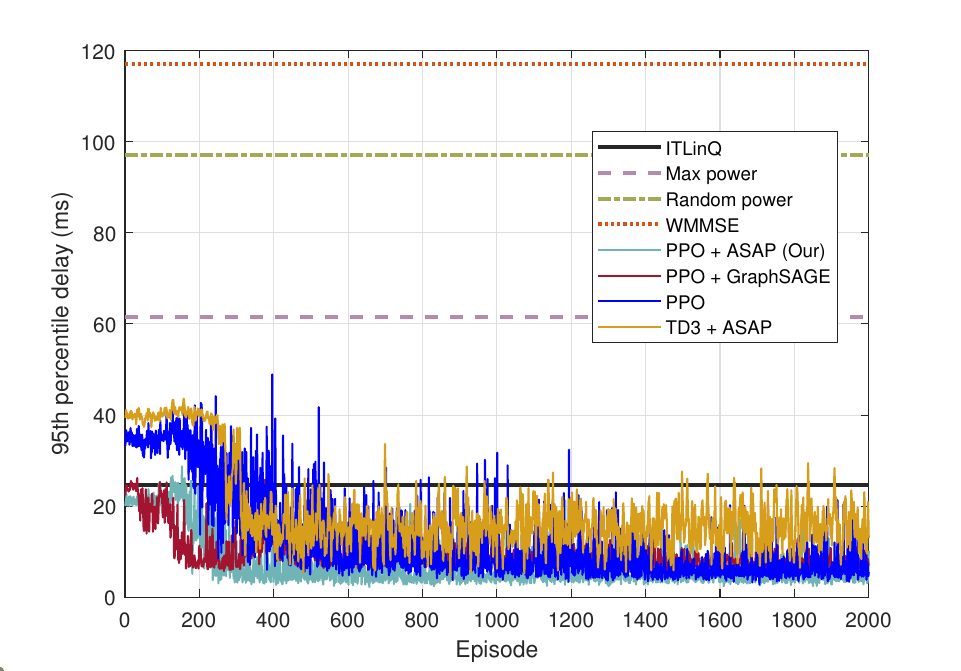}}\label{95th_percentile_delay}
\hspace{-10mm}
\caption{Performance for each training episode with increasing iterations for six D2D pairs over $300$ time slots.}\label{performance with training}
\end{center}
\vspace{-10pt}
\end{figure*}

\subsection{Performance and Fairness Evaluation}\label{Performance and Fairness Evaluation}

In Fig. \ref{packet transmission efficiency}, we compare the performance of the proposed method with baseline methods in terms of packet transmission for six D2D pairs over $300$ time slots. The numbers of transmitted packets for ITLinQ, Max power, Random power, WMMSE, our method, PPO $+$ GraphSAGE, and TD3 $+$ ASAP  are 5144, 5115, 5148, 4639, 5261, 5194, and 5218, respectively. Correspondingly, the numbers of remaining packets for these methods are 271, 320, 219, 834, 142, 228, and 294, respectively. As can be seen from the figure, the proposed method transmits the most packets and has the fewest remaining packets, demonstrating its significant superiority in packet transmission efficiency compared to the baseline methods.
\begin{figure}[htbp]
\begin{center}
  \includegraphics[width=3.2in,height=2.6in]{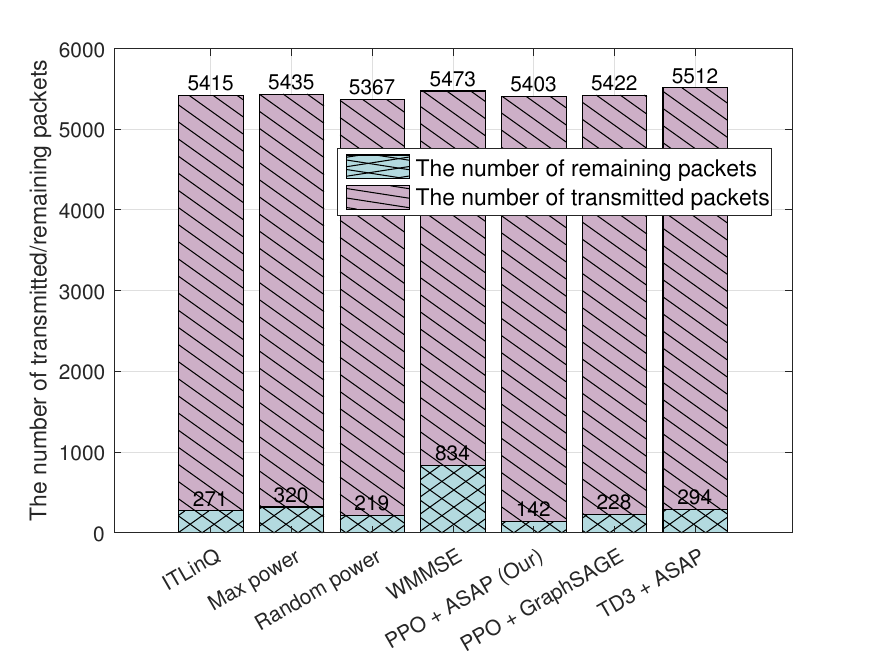}
  \caption{The number of transmitted/remaining packets for six D2D pairs over $300$ time slots.}\label{packet transmission efficiency}
\end{center}
\vspace{-10pt}
\end{figure}

We further compare the performance of our method with several baselines in terms of average delay as the number of D2D pairs varies in the communication networks, as illustrated in Fig. \ref{performance to D2D number}. The PPO network of the agent has trained over $2000$ episodes for each different number of D2D pairs and tested on the same number of D2D pairs. As the number of D2D pairs increases, interference between different users intensifies, affecting communication rates and resulting in longer waiting times of packets in buffers, as shown in Fig. \ref{performance to D2D number}. The performance trends of different baseline methods align with those in Fig. \ref{performance with training}. In addition, we observe that as the number of users in the network increases, interference grows significantly, leading to a noticeable degradation in the performance of baseline methods and a sharp increase in average delay. When the number of D2D pairs increases from $4$ to $12$, the performance of WMMSE declines by approximately $50.85\%$, Random power by approximately $46.42\%$, Max power by approximately $44.49\%$, and ITLinQ by approximately $45.98\%$. While our method also experiences performance degradation, it still outperforms all baseline methods.
\begin{figure}[htbp]
\begin{center}
  \includegraphics[width=3.2in,height=2.4in]{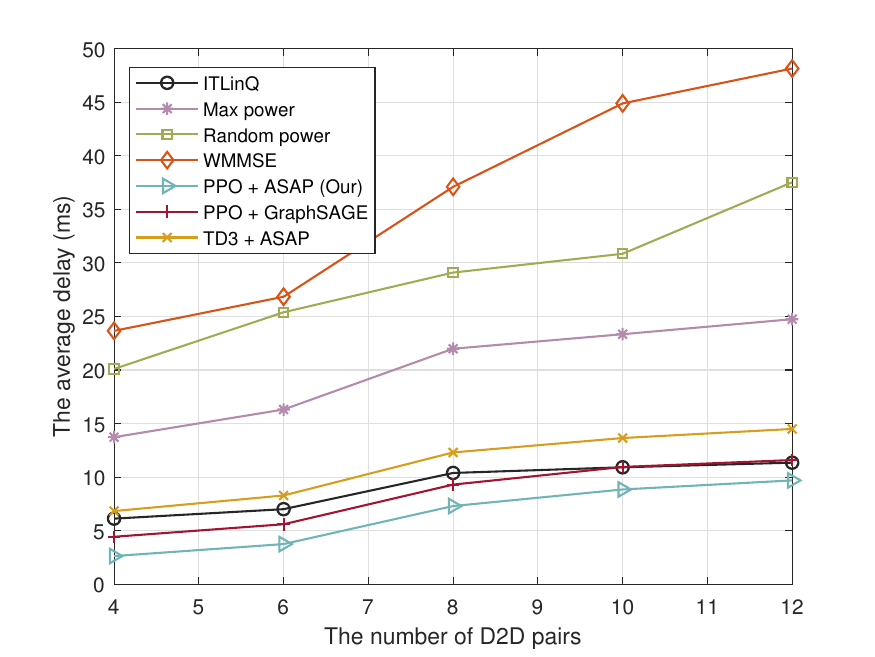}
  \caption{Average delay performance as the number of D2D pairs $M$ varies.}\label{performance to D2D number}
\end{center}
\end{figure}

Fig. \ref{arrival rate} shows the performance comparison between the proposed method and baseline methods when the arrival rate $\lambda$ changes, with the number of D2D pairs $M=6$. From the figure, it can be observed that as the arrival rate increases, i.e., the number of packets arriving at buffers per time slot rises, the accumulation of packets in each buffer intensifies, leading to a gradual increase in average delay. However, since the proposed method considers the number of accumulated packets in buffers as part of the state of the agent, it can effectively mitigate the average delay.
\begin{figure}[htbp]
\begin{center}
  \includegraphics[width=3.2in,height=2.5in]{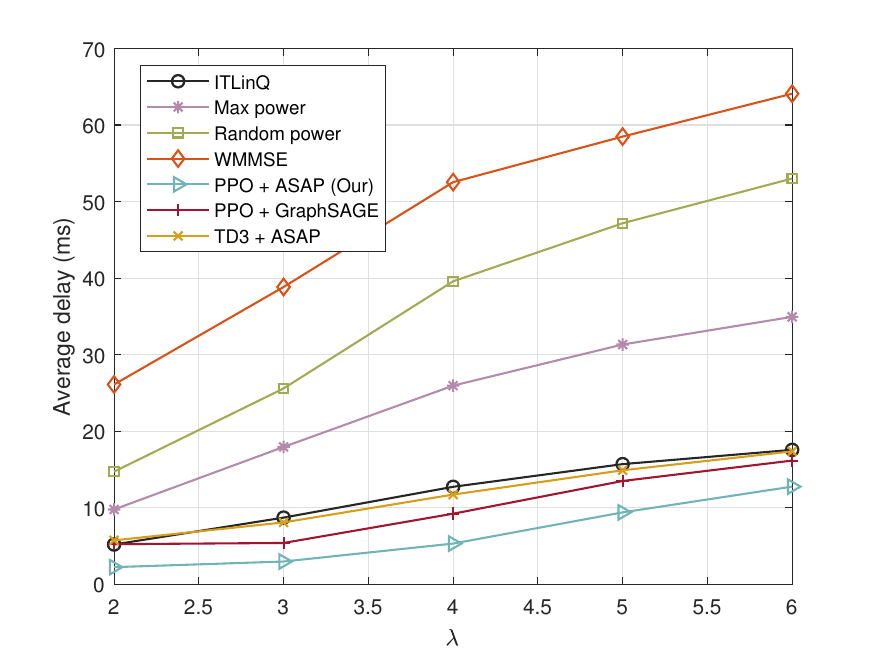}
  \caption{Average delay performance as arrival rate $\lambda$ varies.}\label{arrival rate}
\end{center}
\end{figure}

To illustrate why the proposed method achieves better performance compared with baseline methods like WMMSE, we select an episode and present the remaining packets in each buffer for every D2D pair after each time step in Fig. \ref{Remaining packets}. To provide clearer results, we focus on a communication scenario with four D2D pairs. Fig. \ref{Remaining packets}(a) shows the remaining packets in each buffer after each time step for the proposed method. Overall, the number of remaining packets in each buffer is relatively small. For instance, D2D pair 3, despite having poorer performance, only has around 80 remaining packets at most. Additionally, from the figure, it can be observed that the number of remaining packets fluctuates frequently at each time step, indicating that almost all D2D pairs are actively transmitting packets. This demonstrates the fairness of the proposed method, with detailed insights to be further elaborated in Fig. \ref{D2D transmission rates}. It shows the remaining packets in each buffer after each time step for WMMSE in Fig. \ref{Remaining packets}(b). Generally, there are relatively more remaining packets in each buffer. Using D2D pair 3 as an example again, the number of remaining packets can reach as high as $600$. Moreover, the figure shows that the remaining number of packets for each D2D pair increases over a period, indicating that during that time, they have transmitted few or even no packets, resulting in an accumulation of packets in the buffer.

\begin{figure*}[htbp]
\begin{center}
\subfigure[The remaining packets in each buffer of the proposed method.]{
\includegraphics[width=3.2in,height=2.5in]{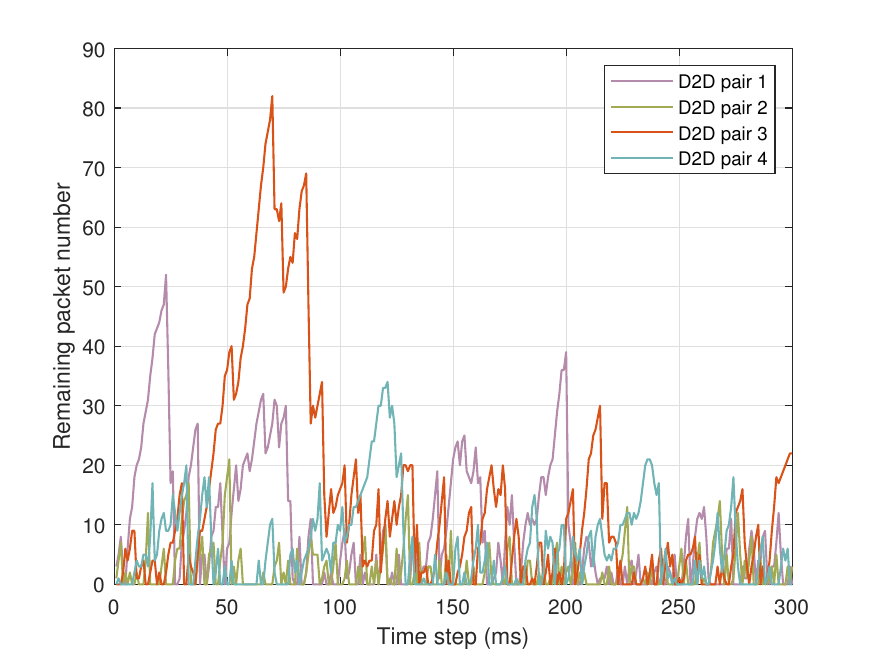}
}
\subfigure[The remaining packets in each buffer of WMMSE.]{
\includegraphics[width=3.2in,height=2.5in]{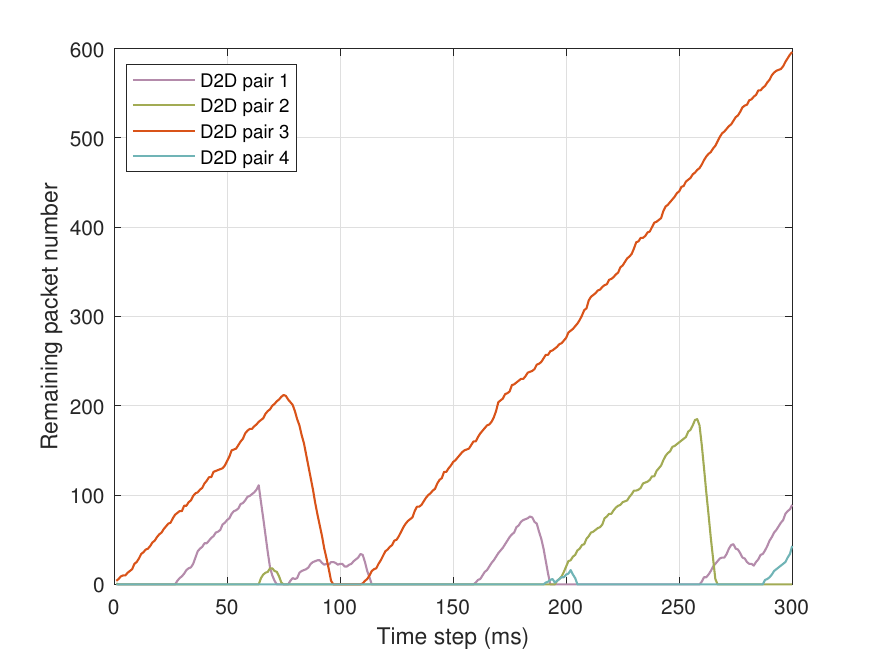}
}	
\caption{The change of the remaining packets in each buffer over time step of the proposed method and WMMSE.}
\label{Remaining packets}

\subfigure[D2D transmission rates of the proposed method.]{
\includegraphics[width=3.2in,height=2.5in]{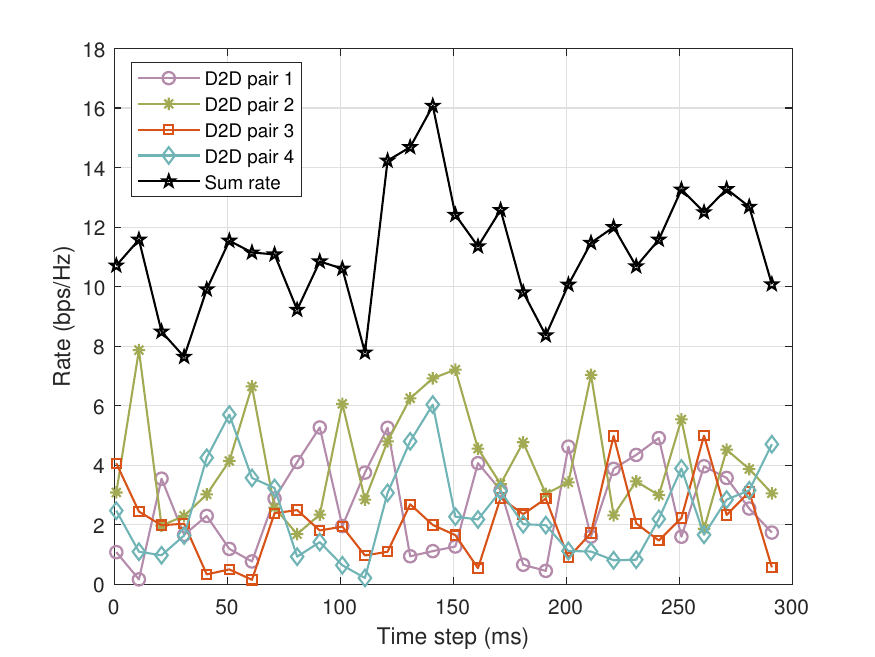}
}
\subfigure[D2D transmission rates of WMMSE.]{
\includegraphics[width=3.2in,height=2.5in]{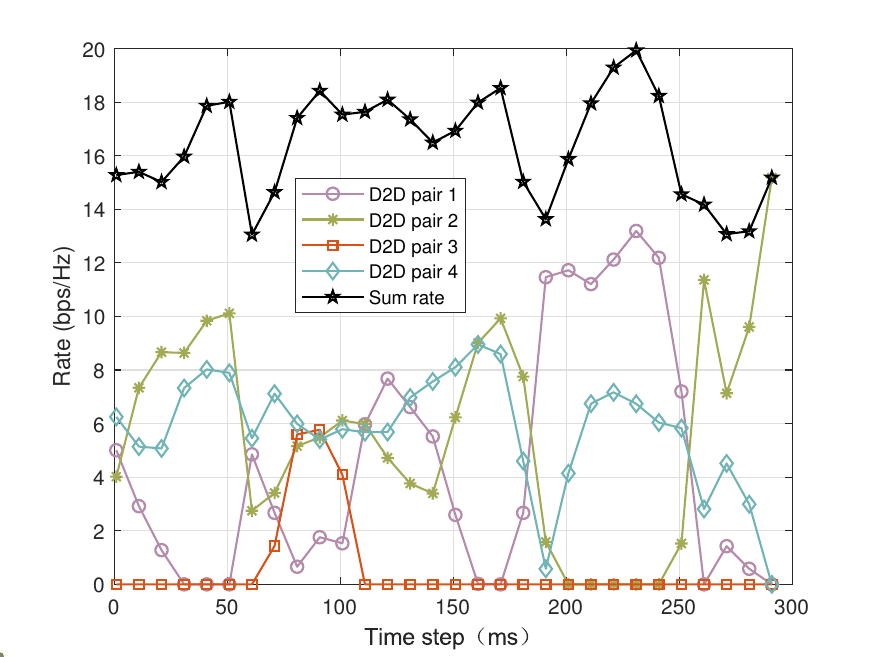}
}	
\caption{D2D transmission rates of the proposed method and WMMSE within the same episode as Fig. \ref{Remaining packets}.}
\label{D2D transmission rates}
\end{center}
\vspace{-15pt}
\end{figure*}
\begin{table*}[h]
\centering
\caption{Scalability of our method on average delay for large network size but similar network density.}\label{similar_density}
\begin{tabular}{|c|c|c|c|c|c|c|c|c|}
\hline
$M$  & Area ($m^2$) & Our method (ms) & PPO $+$ GraphSAGE & TD3 $+$ ASAP  & ITLinQ & Max power& Random power & WMMSE   \\
\hline
6  & 500     & 5.374 (trained)  & 6.609 (trained) & 6.994 (trained) & 7.078  & 12.664    & 21.542       & 34.412  \\ \cline{1-2}
12 & 700     & 5.565           &     6.909      &   7.308    & 7.439  & 14.182    & 20.351       & 37.686  \\ \cline{1-2}
24 & 1000   & 5.673           &     7.294      &   8.854   & 8.164  & 18.458    & 27.849       & 46.268  \\ \cline{1-2}
54 & 1500   & 6.319           &     8.556      &   9.701    & 9.625   & 19.768    & 29.865       & 51.769  \\ \cline{1-2}
\hline
\end{tabular}
\end{table*}
\begin{table*}[htbp]
\centering
\caption{Scalability of our method on average delay for large network size and higher network density.}\label{higher_density}
\begin{tabular}{|c|c|c|c|c|c|c|c|c|}
\hline
$M$  & Area ($m^2$)        & Our method (ms) & PPO $+$ GraphSAGE & TD3 $+$ ASAP  & ITLinQ & Max power & Random power & WMMSE  \\ \hline
6  & \multirow{5}{*}{500}  & 5.374 (trained) &6.609 (trained)   & 6.994 (trained) & 7.078       & 14.660    & 21.546       & 34.412 \\ \cline{1-1} 
8  &                      & 7.424        &    7.926    &   8.380        & 8.406       & 17.008         & 25.130         & 43.334 \\ \cline{1-1} 
10 &                      & 8.707        &    9.049    &   9.225          & 9.247       & 19.817         & 26.827        & 47.693 \\ \cline{1-1} 
12 &                      & 13.238      &     14.259    &  14.864          & 14.352      & 29.096         & 43.594            & 62.576 \\ \hline
\end{tabular}
\end{table*}

To further demonstrate the fairness guaranteed by the proposed method, we compare in Fig. \ref{D2D transmission rates} the instantaneous rates of all D2D pairs at each time step between our method and the WMMSE in the same episode as Fig. 7. We averaged the instantaneous rates over every $10$ time steps to provide a clear presentation for both methods. From Fig. \ref{D2D transmission rates}(a), it can be observed that the rates of all users fluctuate frequently, and overall, there is not a significant difference in rates among each user, even for D2D pair $3$ with poor channel conditions. In contrast, Fig. \ref{D2D transmission rates}(b) shows that almost all users experience periods of zero rate, especially D2D pair $3$, where there are extended intervals with zero rates, resulting in packet accumulation. While WMMSE does achieve a higher sum rate, the difference is not substantial compared to the proposed method. Fig. \ref{Remaining packets} and Fig. \ref{D2D transmission rates} clearly demonstrate that the proposed method effectively reduces average delay and ensures fairness through power control strategies.

\subsection{Scalability Evaluation with Varying Network Densities}

We evaluate the scalability of our method in unseen networks with a larger number of D2D pairs while maintaining a consistent network density, as shown in Table \ref{similar_density}. This is achieved by proportionally increasing the area size. Specifically, we adjust the edge length of the square area to $\sqrt {M/24} \times 1$ km, ensuring a constant density of 24 users/km$^2$, which is consistent with the scenario where six D2D pairs are distributed over a 500 m $\times$ 500 m square area. The agent is trained on networks with six D2D pairs over a 500 m $\times$ 500 m square area and then tested on networks with different numbers of D2D pairs and area sizes. We change the random seed during testing in the six D2D pair to ensure the agent has not encountered the exact scenario before. The performance of the baseline methods is obtained directly by applying them to the current network settings. As detailed in Table \ref{similar_density}, the performance of our method outperforms baseline methods on average delay and remains stable in unseen networks with a larger number of D2D pairs while maintaining similar network density.

In Table \ref{higher_density}, we assess the scalability of our method in unseen networks with higher network density. The experiment maintains a constant square area size while increasing the number of D2D pairs. The agent is initially trained on networks with six D2D pairs and tested on denser networks within the same 500 m $\times$ 500 m area. The performance of the baseline methods is directly obtained by applying them to the current network settings. The table shows that the proposed method demonstrates robust scalability, especially as network density increases. As network density rises, the interference among different D2D pairs intensifies, affecting the instantaneous rate and increasing the average delay. Nevertheless, the performance of our method remains acceptable and consistently outperforms the baseline methods in terms of average delay.

\subsection{Generalization Across Varying User Distributions}

We evaluate the generalization of our method with respect to different user distributions in Table \ref{Generalization}. The agent is trained on networks with six D2D pairs located within an annular region with a radius ranging from 10 m to 50 m over a square area of 500 m $\times$ 500 m. We then test the agent on scenarios where the annular region radius varies while maintaining six D2D pairs within the same 500 m $\times$ 500 m area. As illustrated in Table \ref{Generalization}, the proposed method performs effectively across a range of radii, including 2 m $\sim$ 65 m, 30 m $\sim$ 70 m, and all 30 m, without further training. This demonstrates the adaptability of GNN parameterizations in our method, enabling it to be applicable to different scenarios. It differs from the approaches relying solely on RL, which requires retraining the agent when deployed in the new network configurations.

\begin{table}
\centering
\caption{Generalization of our method on average delay for different user distributions.}\label{Generalization}
\begin{tabular}{|c|c|c|c|c|}
\hline
D2D distance (m)   & 2$\sim$65  &  10$\sim$50 (trained)   &   30$\sim$70   & all 30  \\
\hline
Average delay (ms) & 5.415      &  5.374                  &   5.791        & 4.755   \\
\hline
\end{tabular}
\end{table}

\section{Conclusion and future research}\label{section 5}
This paper has addressed the challenge of balancing user fairness in D2D communication by proposing a novel power allocation approach based on GNNs and RL. Our approach extended the traditional state information to include comprehensive factors such as channel state information, packet delay, the number of backlogged packets, and the number of transmitted packets, providing a more detailed representation of the network topologies and states. We adopted a centralized RL method with a central controller acting as an agent, trained using the PPO algorithm. By embedding GNN layers into both the actor and critic networks of the PPO algorithm, we enhanced the scalability and generalization of the proposed method. This integration enabled efficient parameter updates of GNNs and the generation of a low-dimensional embedding from the state information, which was crucial for optimizing power allocation strategies. Our extensive numerical evaluations demonstrated that the proposed method significantly reduced average delay while ensuring user fairness, outperforming baseline methods. The superior scalability and generalization capability of our approach highlighted its potential for practical deployment in various D2D communication network configurations. This work provides a robust solution to the delay-power trade-off challenge, leveraging advanced machine learning techniques to optimize resource allocation in next-generation wireless networks.

Future research could explore distributed and partially observable RL methods by adopting distributed GNN-based RL architectures, where each node (or local cluster) performs localized inference and exchanges partial information with its neighbors. Instead of transmitting raw state information, each node locally processes and transmits parameterized information, reducing the data transmission load while preserving essential system state representations. This reduces the reliance on a central controller while leveraging the message-passing capability of GNNs to efficiently disseminate local state information.

\begin{appendices}
\section{\footnotesize{PROOF OF PROPOSITION 1}}
According to the reward function in \eqref{e16}, for each episode that covers the communication duration during training, the cumulative reward $C$ can be expressed as
\begin{equation}\label{ade25}
\begin{aligned}
C &= \sum\limits_{n = 1}^{{T_{sum}}} {{R_n}}  =  - \sum\limits_{n=1}^{{T_{sum}}} {\sum\limits_{i=1}^M {{J_i}\left[ n \right]} }  \\
&=  - \sum\limits_{i=1}^M {\left( {\sum\limits_{n=1}^{{T_{sum}}} {{J_i}\left[ n \right]} } \right)} \mathop  \approx \limits^{\Circled{1}}  - \sum\limits_{i=1}^M {\left( {\sum\limits_{g = 1}^{m\left( i \right)} {{T_{{b_i}}}\left( g \right)} } \right)}  \\
&\mathop  \approx \limits^{\Circled{2}}  - \sum\limits_{i=1}^M {\left( {\sum\limits_{g = 1}^{m\left( i \right)} {{T_{{D_i}}}\left( g \right)} } \right)} , \\
\end{aligned}
\end{equation}
where {\footnotesize{\Circled{1}}} follows from the fact that if a packet is not transmitted immediately, it needs to wait in the buffer, with its waiting time increasing by one for each additional time slot, and the value that the number of packets in each time slot is then counted and summed across all time slots, corresponds to the cumulative reward. {\footnotesize{\Circled{2}}} holds since if the allocated transmit power is sufficient for packet transmission, the transmission delay equals the packet length divided by the transmission rate, which must be less than 1 ms. According to the simulation results, the minimum average delay is approximately 5 ms, meaning that the impact of queueing delay on communication performance is significantly greater than that of transmission delay.

Through multiple training episodes, we expect to maximize returns, i.e., to maximize cumulative reward, thereby
\begin{equation}\label{ade26}
\begin{aligned}
\max C & \approx \max  - \sum\limits_{i = 1}^M {\sum\limits_{g = 1}^{m\left( i \right)} {{T_{{D_i}}}\left( g \right)} }  = \min \sum\limits_{i = 1}^M {\sum\limits_{g = 1}^{m\left( i \right)} {{T_{{D_i}}}\left( g \right)} }  \\
& \mathop  \Leftrightarrow \limits^{\Circled{3}} \min \underbrace {\frac{1}{M}\sum\limits_{i = 1}^M {\frac{1}{{\bar m\left( i \right)}}} \sum\limits_{g = 1}^{m\left( i \right)} {{T_{{D_i}}}\left( g \right)} }_{\Circled{4}}, \\
\end{aligned}
\end{equation}
where {\footnotesize{\Circled{3}}} holds since that given the average arrival rate $\lambda$, and the communication duration of $T_{sum}$ time slots, the average number of packets $\bar m\left( i \right)$ arriving at the buffer $i$ during remains constant, and the number of D2D pairs $M$ is a constant. Furthermore, considering the truncation effect, there are still packets left in the buffer at the end of the entire communication duration, the average delay we actually calculate in \eqref{e16} is larger than the value of {\footnotesize{\Circled{4}}}. However, due to the minimization operation, the objective function and cumulative reward, are brought as close as possible. Hence, the proof is completed. $\hfill\blacksquare$

\section{\footnotesize{PROOF OF PROPOSITION 2}}
Since the architecture of message-passing GNNs is constructed by stacking layers composed of multiple aggregation and combination operations, as shown in (\ref{e13}), it suffices to show that the GNN operator $\Psi_l $ at any layer $l \in \left\{ {1, \cdots, L} \right\}$ is permutation-equivariant. For the layer $l-1$ in the unpermuted graph, let $\boldsymbol{\cal Y}^{(l-1)}: = {\left[ {{\cal Y}{{_1^{(l-1)}}^{^T}}, \cdots, {\cal Y}{{_M^{(l-1)}}^{^T}}} \right]^T} \in {\mathbb{R}^{M \times {F_{l-1}}}}$ denotes the array of node features. Moreover, for the layer $l-1$ in the permuted graph, let $\boldsymbol{\cal \tilde Y}^{(l-1)}: = {\left[ {{\cal \tilde Y}{{_1^{(l-1)}}^{^T}}, \cdots, {\cal \tilde Y}{{_M^{(l-1)}}^{^T}}} \right]^T} = {\left[ {{\cal Y}{{_{\pi (1)}^{(l-1)}}^{^T}}, \cdots, {\cal Y}{{_{\pi (M)}^{(l-1)}}^{^T}}} \right]^T} = \boldsymbol{\Pi} \boldsymbol{\cal Y}^{(l-1)} $ denotes the array of node features. $\pi (\cdot)$ denotes the permutation operator corresponding to the permutation matrix $\boldsymbol{\Pi}$, which swapping the positions of nodes $v \in \mathbb{V} $ forms a new set of nodes $\tilde v \in \mathbb{\tilde V}$. Let ${\mathcal{N}_{\tilde v}^{(1)}}$ denotes 1-hop neighbors of any node $\tilde v \in \mathbb{\tilde V}$ in the permuted graph, and for any edge $(\tilde u, \tilde v)$, let $\tilde e_{\tilde u, \tilde v}$ denotes its weight. For simplicity, ignoring the parameter vector $\vartheta$, the feature vector of any node $\tilde v \in \mathbb{\tilde V}$ at layer $l$ in the permuted graph can be written as
\begin{equation}
\begin{small}
\begin{array}{l}
\tilde {\cal Y}_{\tilde v}^{\left( l \right)} = {\Psi ^{\left( l \right)}}\left( {\tilde {\cal Y}_{\tilde v}^{\left( {l - 1} \right)},{e_{\tilde v,\tilde v}},{{\{ \tilde {\cal Y}_{\tilde u}^{\left( {l - 1} \right)},{e_{\tilde u,\tilde v}}\} }_{\tilde u \in {{ {\cal N}}_{\tilde v}^{(1)}}}}} \right)\\
\mathop  = \limits^{\Circled{5}} {\Psi ^{\left( l \right)}}\left( {{\cal Y}_{\pi \left( v \right)}^{\left( {l - 1} \right)},{e_{\pi \left( v \right),\pi \left( v \right)}},{{\{ {\cal Y}_{\pi \left( u \right)}^{\left( {l - 1} \right)},{e_{\pi \left( u \right),\pi \left( v \right)}}\} }_{u \in \mathbb{\tilde V} :\pi \left( u \right) \in {{\cal N}_{\pi \left( v \right)}^{(1)}}}}} \right)\\
\mathop  = \limits^{\Circled{6}} {\Psi ^{\left( l \right)}}\left( {{\cal Y}_{\pi \left( v \right)}^{\left( {l - 1} \right)},{e_{\pi \left( v \right),\pi \left( v \right)}},\left\{ {{\cal Y}_u^{\left( {l - 1} \right)},{e_{u,\pi \left( v \right)}}{\} _{u \in {{\cal N}_{\pi \left( v \right)}^{(1)}}}}} \right\}} \right)\\
\mathop  = \limits^{\Circled{7}} {\cal Y}_{\pi \left( v \right)}^{\left( l \right)},
\end{array}
\end{small}
\end{equation}
where {\footnotesize{\Circled{5}}} is true because ${{\cal N}_{\tilde v}^{(1)}} = \left\{ {\tilde u \in \mathbb{\tilde V} :\pi \left( {\tilde u} \right) \in {{\cal N}_{\pi \left( {\tilde v} \right)}^{(1)}}} \right\}$ for any node $\tilde v \in \mathbb{\tilde V}$  and ${e_{\tilde u,\tilde v}} = {e_{\pi \left( u \right),\pi \left( v \right)}}$ for any edge $(\tilde u,\tilde v)$ in the permuted graph. Additionally, {\footnotesize{\Circled{6}}} holds due to a change of variables $u \leftarrow \pi \left( u \right)$, and {\footnotesize{\Circled{7}}} follows from the definition of the aggregation and combination operators in (\ref{e13}) for node $\pi (v)$. This implies that $\tilde {\cal Y}^{(l)}=\Pi {\cal Y}^{(l)}$, i.e., the output feature vectors given the permuted inputs is the accordingly-permuted version of the output feature vectors in the unpermuted graph, hence completing the proof. This means that $\tilde {\cal Y}^{(l)}=\Pi {\cal Y}^{(l)}$, i.e., the output feature vector for the permuted input is the corresponding permuted version of the output feature vector from the unpermuted graph, thereby completing the proof. $\hfill\blacksquare$

\end{appendices}

\newpage

\vfill

\end{document}